\newtheoremstyle{rem}{1.3ex}{1.3ex}{\rmfamily}{}
{\itshape\rmfamily}{}{1.5ex}{}
\newtheorem{theorem}{Theorem}[section]
\newtheorem{lemma}[theorem]{Lemma}
\newtheorem{corollary}[theorem] {Corollary}
\theoremstyle{definition}
\newtheorem{example}[theorem] {Example}
\newtheorem{remark}[theorem] {Remark}
\renewcommand{\section}{\secdef\sct\sect}
\newcommand{\sct}[2][default]{\refstepcounter{section}
\setcounter{equation}{0}
\vspace{0.5cm}
\centerline{ \large
\scshape \arabic{section}.\ #1}
\vspace{0.3cm}}
\newcommand{\sect}[1]{
\vspace{0.5cm}
\centerline{\large\scshape #1}
\vspace{0.3cm}}
\renewcommand{\subsection}{\secdef \subsct\sbsect}
\newcommand{\subsct}[2][default]{\refstepcounter{subsection}
\nopagebreak
\vspace{0.5\baselineskip}
{\flushleft\bf \arabic{section}.\arabic{subsection}~\bf #1  }
\nopagebreak}
\newcommand{\sbsect}[1]{\vspace{0.1cm}\noindent
{\bf #1}\vspace{0.1cm}}
\def\rr{\mathbb R}
\def\nn{\mathbb N}
\def\phi{\varphi }
\def\cali{{\mathcal I}}
\def\calj{{\mathcal J}}
\def\calm{{\mathcal M}}
\def\caln{{\mathcal N}}
\def\cals{{\mathcal S}}
\def\cM{{\mathcal M}}
\newcommand{\supp}{{\operatorname {supp}}}
\newcommand{\R}     {\mathbb{R}}
\newcommand{\N}     {\mathbb{N}}
\renewcommand{\P}   {\mathbb{P}}
\newcommand{\E}     {\mathbb{E}}
\def\1{{\mathchoice {1\mskip-4mu\mathrm l}
                    {1\mskip-4mu\mathrm l}
                    {1\mskip-4.5mu\mathrm l} {1\mskip-5mu\mathrm l}}}
\begin{document}

\title[Large deviations for disordered bosons]{\large
Large deviations for\\\vspace{2mm}disordered bosons and\\\vspace{5mm}multiple orthogonal polynomial ensembles}

\author[Peter Eichelsbacher, Jens Sommerauer and Michael Stolz]{} 
\maketitle
\thispagestyle{empty}
\vspace{0.2cm}

\centerline{\sc Peter Eichelsbacher\footnote{Corresponding author: Ruhr-Universit\"at Bochum, Fakult\"at f\"ur Mathematik,
NA 3/66, D-44780 Bochum, Germany, {\tt peter.eichelsbacher@ruhr-uni-bochum.de}
\\All authors have been supported by Deutsche Forschungsgemeinschaft via SFB/TR 12}, Jens Sommerauer, Michael Stolz}
\centerline{\sc (Ruhr-Universit\"at Bochum)}


\vspace{2 cm}

\begin{quote}
{\small {\bf Abstract:} }
We prove a large deviations principle
for the empirical measures of a class of biorthogonal and multiple orthogonal
polynomial ensembles that includes
biorthogonal Laguerre, Jacobi and Hermite ensembles,
the matrix model of Lueck, Sommers and Zirnbauer for disordered bosons, the Stieltjes-Wigert matrix model of 
Chern-Simons theory, and Angelesco ensembles.
\end{quote}

\bigskip\noindent
{\bf AMS 2000 Subject Classification:} Primary 60F10; Secondary 15B52, 33C45, 60B20.

\medskip\noindent
{\bf Key words:} Large deviations, biorthogonal ensembles, multiple orthogonal polynomial ensembles,
disordered bosons, random matrix ensembles


\newpage
\setcounter{section}{0}

\section{Introduction}
{\it Orthogonal polynomial ensembles} are given by joint probability densities of the form
\begin{equation} \label{ortho}
p(x_1, \ldots, x_n) = c_n \prod_{i < j} (x_i-x_j)^2 \prod_{i=1}^n w(x_i)
\end{equation}
with a positive weight function $w$. The classical examples, where $w$ is the weight of Hermite, Laguerre, or Jacobi
polynomials, naturally arise in random matrix theory: as joint eigenvalue distribution of the Gaussian Unitary Ensemble
(Hermite), of Wishart matrices (Laguerre), or of random projectors (Jacobi). 

In \cite{Borodin:1999}, Borodin studied more general joint probability densities of the form 
\begin{equation} \label{biorth}
p(x_1, \ldots, x_n) = c_n \prod_{i < j} (x_i-x_j)(x_i^{\theta} - x_j^{\theta}) \prod_{i=1}^n w(x_i),
\end{equation}
where $\theta$ is a fixed positive number. As pointed out in \cite{Borodin:1999}, these are related to biorthogonal polynomials, and will be referred to as {\it biorthogonal ensembles}. 

This note is mainly motivated by a special biorthogonal ensemble that arises from a random matrix model for {\it disordered bosons} that was proposed by Lueck, Sommers,
and Zirnbauer in \cite{Lueck/Sommers/Zirnbauer:2006}. The model amounts to the product of a Wishart matrix and the fundamental matrix of the standard symplectic form, and on the level of eigenvalues, interpreted as characteristic frequencies of disordered quasi-particles, one obtains the joint density
\begin{equation} \label{densitybosonic}
{\tilde q}_{n,\alpha}(x_1,\ldots, x_{n})=c_n \prod_{i=1}^{n}x_i ^{\alpha}\mathrm{e}^{-\tau x_i} 
\prod_{1\leq i<j \leq n}|x_i-x_j|\left|x_i^{2}-x_j^{2}\right|1_{[0,\infty)^n}(x_1,\ldots, x_{n})
\end{equation}
with $\alpha \in \nn$. In \cite{Lueck/Sommers/Zirnbauer:2006} is was shown
that the correlation functions of the frequencies in the bulk of the spectrum are in the Gaussian Unitary Ensemble universality class, yet a novel
scaling behaviour is found at the low frequency end of the spectrum. Other applications of biorthogonal ensembles to physics are discussed in \cite{Muttalib:1995} and \cite{Tierz:2010}.  In the latter reference, the motivation comes from matrix models for Chern-Simons theory.

The aim of this note is to complement the Lueck-Sommers-Zirnbauer results by a large deviations 
principle for the empirical measure of the characteristic frequencies. For any sequence of random numbers $x_1, \ldots, x_n$ we denote
by
$$
L_n : = \frac 1n \sum_{i=1}^n \delta_{x_i}
$$
the empirical distribution or {\it empirical measure} of these values (a random probability measure on $\Bbb R$). Define
the mean empirical measure $\bar{L}_n = \E L_n$ by the relation $\langle \bar{L}_n, f \rangle = \E \langle L_n, f \rangle$ for all
continuous and bounded functions $f : \Bbb R \to \Bbb R$. One result in \cite{Lueck/Sommers/Zirnbauer:2006} is that the sequence
$(\bar{L}_n)_n$ converges weakly to a probability measure on $\Bbb R$ with Lebesgue density
\begin{equation} \label{bosonrho}
\varrho_{\infty} (t) : = \frac{1}{2\pi} (t/b)^{-1/3} \bigl( (1+ \sqrt{1-t^2/b^2})^{1/3} -(1- \sqrt{1-t^2/b^2})^{1/3} \bigr), 
\end{equation}
for $0<t\le b:= 3\sqrt{3}$.
One consequence of the large deviations principle that will be proven in what follows is that this statement can
be improved: we will show that the empirical measures $(L_n)_n$ themselves converge weakly, in probability, to $\varrho_{\infty}$.

Actually, we will study large deviations principles for empirical measures 
in a broader framework that encompasses not only ensembles like \eqref{biorth}, but also takes care
of weight functions $w_n$, depending on $n$, and determinantal parts like
$ \prod_{i < j} |x_i-x_j|^{\beta} (x_i^{\theta} - x_j^{\theta})$ for any $\beta >0$. We will also obtain large deviations results
for {\it multiple orthogonal polynomial} ensembles (see \cite{Kuijlaars:2010}). The main observation is that while
these large deviations results do depend on the explicit formulae for the joint distributions of the eigenvalues like \eqref{ortho} and \eqref{biorth},
they are independent of the determinantal structure of the correlation functions, and we need not invoke the theory of orthogonal or biorthogonal
or multiple orthogonal polynomials.

In \cite{BenArous/Guionnet:1997} (see also \cite{Zeitounibook}), Ben Arous and Guionnet have obtained a large deviations principle (LDP) for the empirical measure of eigenvalues from the
Wigner-Dyson ensembles GOE/GUE/GSE in the space ${\mathcal M}_1(\Bbb R)$ of probability measures on the Borel sets of $\Bbb R$, endowed with the weak topology, with speed
$n^2$ and good rate function (GOE case)
$$
I(\mu) = \frac 14 \int x^2 \, \mu(dx) + \frac 12 \int \int \log |x-y|^{-1} \mu(dx) \mu(dy) - \frac 38,
$$
whose unique minimiser is the semicircle distribution. Recall that a family of probability measures $(\mu_{\varepsilon})_{\varepsilon >0}$ on a topological space
$X$ is said to obey a large deviations principle (LDP) with speed $\varepsilon^{-1}$ and good rate function $I: X \to [0,\infty]$ if $I$ is lower semi-continuous
and has compact level sets $N_L := \{x \in X: I(x) \leq L \}$, for every $L \in[0, \infty)$, and
$$
\liminf_{\varepsilon \to 0} \varepsilon \log \mu_{\varepsilon}(G) \geq - \inf_{x \in G} I(x)
$$
for every open $G \subseteq X$ and
$$
\limsup_{\varepsilon \to 0} \varepsilon \log \mu_{\varepsilon}(A) \leq - \inf_{x \in A} I(x)
$$
for every closed $A \subseteq X$.

This has been generalized in \cite{Eichelsbacher/Stolz:2006} to joint densities of the form 
\begin{equation} \label{hamiltonian}
q_n(x_1,\ldots, x_{p(n)})=\frac{1}{Z_n}\prod_{i=1}^{p(n)}w_n(x_i)^n \prod_{1\leq i<j \leq p(n)}
\left|x_i^{\theta}-x_j^{\theta}\right|^{\beta}, 
\end{equation}
with $\theta \in \N$, $\beta >0$, partition function $Z_n$ and continuous weight functions $w_n:\R\rightarrow\R_0^+$. 
This framework takes care of the needs of mesoscopic physics in that it subsumes matrix versions of all classical
symmetric spaces (see \cite{Heinzner/Huckleberry/Zirnbauer:2005}). With regard to the symmetry classification of 
disordered fermionic systems provided in the last reference,  
let us remark that an analogous classification for the case of bosons is not completely understood, see, however,
the discussion in \cite[Section 4]{Zirnbauer:2010}. Moreover, a $d$-dimensional generalisation of the random matrix
model for disordered bosons in \cite{Lueck/Sommers/Zirnbauer:2006} was studied recently in \cite{Schmittner/Zirnbauer:2010}.

The note is organised as follows. Section 2 is devoted to the formulation of a large deviations principle for the empirical measures
$(L_n)_n$ of the eigenvalues of generalised biorthogonal matrix ensembles. The examples include the random matrix
model of disordered bosons in \cite{Lueck/Sommers/Zirnbauer:2006}, the Stieltjes-Wigert ensembles in \cite{Tierz:2010} as well
as biorthogonal Jacobi-, Laguerre- and Hermite ensembles considered in \cite{Borodin:1999}. In Section 3 we formulate large deviations
principles for a special multiple orthogonal ensemble, the Angelesco ensemble, see \cite{Kuijlaars:2010}.
In Sections 4 and 5 we present the proofs of our large deviations principles.


\section{Large deviations for biorthogonal ensembles and beyond}

In this section, we will derive a LDP for the bosonic ensemble, where 
the density of the joint distribution of the eigenvalues is of form \eqref{densitybosonic}.
Obviously, \eqref{densitybosonic} is a special case of the density
\begin{equation}\label{density}
q_n(x_1,\ldots, x_{p(n)})=\frac{1}{Z_n}\prod_{i=1}^{p(n)}w_n(x_i)^n \prod_{1\leq i<j \leq p(n)}|x_i-x_j|
\left|x_i^{\theta}-x_j^{\theta}\right|1_{\Sigma^{p(n)}}(x_1,\ldots, x_{p(n)}),
\end{equation}
with $\theta \in \N$, partition function $Z_n$ and continuous weight functions $w_n:\R\rightarrow\R_0^+$. For $\theta$ even, 
$\Sigma$ is a closed subset of $[0,\infty)$ while for $\theta$ odd, $\Sigma$ is a closed subset of $\R$. The 
sequence $(p(n))_n$ must satisfy
$$
\lim_{n\rightarrow \infty} \frac{p(n)}{n}=\kappa \in (0,\infty).
$$
Note that for $p(n)=n$ and weight functions $w$ independent of $n$, \eqref{density} subsumes the density for the eigenvalue distribution 
for biorthogonal ensembles as introduced in \cite{Borodin:1999}. For $\theta =1$, $p(n)=n$ and $w(x)=\mathrm{e}^{-\frac{1}{2}x^2}$ we 
also recover the classical GUE.

Throughout the whole section, we write ${\mathcal N}(f)$ for the set of zeros of a function $f:\R\rightarrow \R$ and we assume that the sequence of weight functions $(w_n)_n$ satisfies the following:
\begin{itemize}
\item[(a1)] there exists a continuous function $w: \Sigma \to [0,
\infty )$ such that 
\begin{itemize}
\item $\# \caln(w) < \infty,\ \caln(w_n) \subseteq \caln(w)$ for 
large $n$. (a1.1) 
\item As $n \to \infty$, $w_n$ converges to $w$, and $\log w_n$ to
$\log w$ uniformly on compact sets. (a1.2)  
\item  $\log w$ is Lipschitz on compact sets away from $\caln(w)$. (a1.3)
\end{itemize}
\item[(a2)] If $\Sigma$ is unbounded, then there exists $n_0 \in
\nn$ such that
$$ \lim_{x \to \pm \infty} |x|^{(\theta+1) (\kappa + \epsilon)} \sup_{n \ge n_0} w_n(x) = 0$$
for some fixed $\epsilon > 0$.
\end{itemize}
We will study the asymptotic behaviour of the empirical distribution $L_{n}(x)$ of $x$ 
for $x = (x_1, \ldots, x_{p(n)}) \in \Sigma^{p(n)}$, which is defined as
$$
L_{n}(x) := \frac{1}{p(n)} \sum_{j=1}^{p(n)} \delta_{x_j},
$$
Let the space of probability measures on the Borel sets of $\Sigma$ be denoted by $\cM_1(\Sigma)$ and let 
$Q_n$ denote the joint distribution of random variables $(X_1,\ldots, X_{p(n)})$ with density $q_n$. The main theorem now reads as follows:

\begin{theorem} \label{ldp}
The sequence 
$\left( Q_n \circ L_{n}^{-1}\right)_n$ satisfies a
LDP on $\mathcal M_1(\Sigma)$ with respect to the weak topology
with speed $n^2$ and good rate function
\begin{eqnarray} \label{rate}
I(\mu) & = & \frac{\kappa^2}{2} \int \int \left\{\log \left|x^{\theta} - y^{\theta}\right|^{-1}+
\log |x - y|^{-1}\right\} \, \mu(dx) \, \mu(dy) \\
&& -  \kappa \int \log w(x) \, \mu(dx) + c, \nonumber
\end{eqnarray}
where $\mu \in \mathcal M_1(\Sigma)$ and
\begin{eqnarray} \label{qv12}
c :=  \lim_{n \to \infty} \frac{1}{n^2} \log Z_n 
&=&-\inf_{\mu \in \mathcal{M}_1(\Sigma)} \left\{ \frac{\kappa^2}{2} \int \int \left\{\log \left|x^{\theta}
 - y^{\theta}\right|^{-1}+\log |x - y|^{-1}\right\} \, \mu(dx) \, \mu(dy) \right.\nonumber \\
 && \left. \qquad \qquad-  \kappa \int \log w(x) \, \mu(dx) \right\}
< \infty.
\end{eqnarray} 
\end{theorem}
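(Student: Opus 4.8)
\begin{proofsect}{Sketch of proof}
We follow the Coulomb-gas scheme of Ben Arous and Guionnet \cite{BenArous/Guionnet:1997}, in the variant for $n$-dependent weights and product interactions of \cite{Eichelsbacher/Stolz:2006}. Reading off \eqref{density}, on $\Sigma^{p(n)}$ we have
$$
q_n(x_1,\dots,x_{p(n)})=\frac1{Z_n}\exp\Bigl(\sum_{i<j}\bigl[\log|x_i-x_j|+\log|x_i^\theta-x_j^\theta|\bigr]+n\sum_i\log w_n(x_i)\Bigr),
$$
and, since $\sum_{i<j}(\log w_n(x_i)+\log w_n(x_j))=(p(n)-1)\sum_i\log w_n(x_i)$ and $p(n)/n\to\kappa$, the exponent coincides, up to an $o(n^2)$ error governed by (a1.2) and $p(n)/n\to\kappa$, with $-\frac{2}{\kappa^2}\sum_{i<j}f(x_i,x_j)$, where
$$
f(x,y):=\frac{\kappa^2}{2}\bigl(\log|x-y|^{-1}+\log|x^\theta-y^\theta|^{-1}\bigr)-\frac{\kappa}{2}\bigl(\log w(x)+\log w(y)\bigr).
$$
Set $\mathcal E(\mu):=\int\int f(x,y)\,\mu(dx)\,\mu(dy)$; then the rate function in \eqref{rate} is $I=\mathcal E+c$ with $c=-\inf_\mu\mathcal E(\mu)$, and the whole theorem, \eqref{qv12} included, reduces to the bounds
$$
\limsup_n\tfrac1{n^2}\log\widetilde Q_n(L_n\in A)\le-\inf_{\mu\in A}\mathcal E(\mu),\qquad \liminf_n\tfrac1{n^2}\log\widetilde Q_n(L_n\in G)\ge-\inf_{\mu\in G}\mathcal E(\mu),
$$
for closed $A$, respectively open $G$, where $\widetilde Q_n$ is the \emph{unnormalised} measure with density $Z_nq_n$ and total mass $Z_n=\widetilde Q_n(\cM_1(\Sigma))$: taking $A=G=\cM_1(\Sigma)$ yields $\tfrac1{n^2}\log Z_n\to c$, and dividing by $Z_n$ then turns the above into the asserted LDP for $Q_n\circ L_n^{-1}$ with rate $I$.

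The point of spreading half the potential over the interaction is that $f$ is bounded below and, in fact, $f(x,y)\ge\psi(x)+\psi(y)-\mathrm{const}$ for some $\psi\colon\Sigma\to\R$ tending to $+\infty$ at the boundary of $\Sigma$ (including $\pm\infty$, when $\Sigma$ is unbounded). Indeed, the factorisation $\log|x^\theta-y^\theta|=\log|x-y|+\log\bigl|\sum_{l=0}^{\theta-1}x^ly^{\theta-1-l}\bigr|$ shows that the logarithmic kernel is continuous, $\R$-valued off the diagonal and equal to $+\infty$ exactly on it — here the hypothesis $\Sigma\subseteq[0,\infty)$ for even $\theta$ enters, through the injectivity of $t\mapsto t^\theta$ on $[0,\infty)$; near the finitely many zeros of $w$ the term $-\log w$ supplies a $+\infty$; and for unbounded $\Sigma$, assumption (a2) is precisely what makes $-\log w(x)$ outgrow the at most logarithmic decay of the kernel as $|x|\to\infty$. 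Hence $f$ is lower semicontinuous with values in $(-\infty,+\infty]$, so $\mathcal E(\mu)=\sup_{M}\int\int(f\wedge M)\,d\mu\,d\mu$ is a supremum of weakly continuous functionals, hence lower semicontinuous on $\cM_1(\Sigma)$, and $\mathcal E(\mu)\ge2\langle\mu,\psi\rangle-\mathrm{const}$ forces the level sets $\{\mathcal E\le L\}$ to be tight, hence compact. Thus $I$ is a good rate function; it is finite at, e.g., the normalised Lebesgue restriction to a suitable compact interval inside $\Sigma\setminus\caln(w)$, so $c\in\R$.

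\emph{Upper bound.} From $|x_i-x_j|\le(1+|x_i|)(1+|x_j|)$, $|x_i^\theta-x_j^\theta|\le(1+|x_i|)^\theta(1+|x_j|)^\theta$ and (a2) one bounds $Z_nq_n(x)\le\prod_ih_n(x_i)$ with $h_n$ integrable and $o(1)$ at infinity, uniformly in $n\ge n_0$; using a fixed fraction $w_n^{\delta n}$ of the weight to keep the $x$-integral convergent and the bound $f\ge\psi+\psi-\mathrm{const}$ in the exponent, one obtains $\tfrac1{n^2}\log\widetilde Q_n(\langle L_n,\psi\rangle>R)\to-\infty$ as $R\to\infty$, i.e.\ exponential tightness. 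On a closed $A$ one then truncates, $\sum_{i<j}f(x_i,x_j)\ge\frac12\bigl(p(n)^2\int\int(f\wedge M)\,L_n(dx)L_n(dy)-p(n)M\bigr)$, replaces $w_n$ by $w$ on the relevant compact sets via (a1.2)--(a1.3), and arrives at
$$
\limsup_n\tfrac1{n^2}\log\widetilde Q_n(L_n\in A)\le-(1-\delta)\inf_{\mu\in A}\int\int(f\wedge M)\,d\mu\,d\mu;
$$
letting $M\to\infty$ and $\delta\downarrow0$, and using lower semicontinuity of $\mathcal E$ to pass from small weak neighbourhoods covering the compact part of $A$ to $A$ itself, gives the claimed upper bound.

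\emph{Lower bound.} Given $\mu$ with $\mathcal E(\mu)<\infty$ one reduces, by standard regularisation (truncation of support, convolution), to $\mu$ with compact support in $\Sigma\setminus\caln(w)$ and bounded density, along which $\mathcal E$ converges. For such $\mu$, pick quantiles $x_1^\ast<\dots<x_{p(n)}^\ast$ with $\mu([x_i^\ast,x_{i+1}^\ast])\approx 1/p(n)$ and integrate $q_n$ over a product of small intervals around them: boundedness of the density validates the Riemann-sum approximation of $\int\int(\log|x-y|^{-1}+\log|x^\theta-y^\theta|^{-1})\,d\mu\,d\mu$ (the ordering controls the diagonal terms), while (a1.2) controls $\sum_i\log w_n(x_i^\ast)$, so that $\widetilde Q_n(L_n\in G)\ge\exp(-n^2(\mathcal E(\mu)+o(1)))$ for every weak neighbourhood $G$ of $\mu$, which is the required bound. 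The genuine difficulty is analytic rather than probabilistic: one must use (a2) simultaneously for confinement in the rate function, for integrability of the density in the upper bound, and for exponential tightness, and every estimate must be uniform in $n$ because the weight depends on $n$ — which is exactly what hypotheses (a1)--(a2) are tailored to permit — while the two coinciding logarithmic singularities of the interaction on the diagonal are disposed of by the truncations $f\wedge M$.
\end{proofsect}
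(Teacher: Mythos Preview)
Your sketch is correct and follows essentially the same Coulomb-gas scheme as the paper: pass to the unnormalised measure $P_n=Z_nQ_n$, symmetrise the potential into the two-body kernel $f=F$, establish that $H(\mu)=\int\!\!\int F\,d\mu^{\otimes2}$ is a good rate function (your $\mathcal E$), prove the upper bound via the truncation $F\wedge M$ together with exponential tightness from (a2), and the lower bound by quantile localisation; finally recover the LDP for $Q_n$ and the limit of $n^{-2}\log Z_n$ by taking $A=G=\cM_1(\Sigma)$.

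The only place where your sketch undersells the work is the lower bound. You write that ``the ordering controls the diagonal terms'' and invoke a Riemann-sum argument after regularising $\mu$ to bounded density; the paper proceeds without the convolution step, working directly with an atomless compactly supported $\mu$, and instead proves a chain of pointwise inequalities specific to the mixed interaction $|x-y|\,|x^\theta-y^\theta|$. Concretely, after shifting to boxes around the quantiles $\xi_i$ one needs $(x_i+\xi_i)^\theta-(x_j+\xi_j)^\theta\ge x_i^\theta+\xi_i^\theta-x_j^\theta-\xi_j^\theta$ on the ordered simplex (checked sign-by-sign via the binomial expansion, using that $\Sigma\subset[0,\infty)$ when $\theta$ is even), then the mean-value theorem to compare $x_i^\theta-x_{i+1}^\theta$ with $x_i-x_{i+1}$ on a $\delta$-box with $\delta<\theta^{-1/(\theta-1)}$, and finally a change of variables $x\mapsto x^\theta$ to reduce the remaining integral to an explicitly computable one. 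Your convolution-to-bounded-density shortcut is a legitimate alternative, but you would then need to show that $\mathcal E$ is continuous along that regularisation, which for the $\log|x^\theta-y^\theta|^{-1}$ part is not entirely automatic and essentially requires the same control of the kernel near the diagonal.
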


\begin{corollary}\label{SLLN}
Whenever the joint density of the eigenvalues is as in \eqref{density} and the rate function $I$ has a unique minimiser $\mu^{\ast}$, we obtain under (a1) and (a2) a strong law of large numbers,
$$
\P(L_n \stackrel{weak}{\longrightarrow} \mu^{\ast})=1.
$$ 
Moreover for any closed $A \subset \cM_1(\Sigma)$ with $A \cap \{ \nu \in \cM_1(\Sigma): I(\nu)=0 \}= \emptyset$, then
for all $n \geq n_0$
$$
Q_n(L_n \in A) \leq \exp \bigl( - n^2 \inf_{\nu \in A} I(\nu)/2 \bigr).
$$
\end{corollary}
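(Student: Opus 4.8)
The plan is to deduce both assertions from the large deviations principle of Theorem~\ref{ldp} by the standard route. First I would record two preliminary observations. One: by \eqref{qv12} the infimum of $I$ over $\cM_1(\Sigma)$ equals $0$, so a minimiser of $I$ is exactly a zero of $I$; hence the hypothesis of the Corollary says precisely that $\mu^{\ast}$ is the \emph{unique} zero of $I$. Two: since $\Sigma \subseteq \R$, the space $\cM_1(\Sigma)$ with the weak topology is metrisable and separable, so I may fix a metric $d$ for it and work with the closed sets $A_k := \{\nu \in \cM_1(\Sigma) : d(\nu,\mu^{\ast}) \ge 1/k\}$, $k \ge 1$, none of which contains $\mu^{\ast}$, whose union is $\cM_1(\Sigma) \setminus \{\mu^{\ast}\}$.

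The one genuinely non-formal ingredient is the following consequence of the goodness of $I$: for every closed $A \subseteq \cM_1(\Sigma)$ with $\mu^{\ast} \notin A$ one has $\inf_{\nu \in A} I(\nu) > 0$. I would prove this by contradiction: a sequence $\nu_k \in A$ with $I(\nu_k) \to 0$ eventually lies in the compact level set $\{I \le 1\}$, hence has a subsequential limit $\nu_\infty$, which lies in $A$ because $A$ is closed and satisfies $I(\nu_\infty) = 0$ by lower semicontinuity; thus $\nu_\infty = \mu^{\ast} \in A$, a contradiction. Granting this, the second assertion is immediate: the LDP upper bound applied to the closed set $A$ (which avoids $\mu^{\ast}$ by assumption) gives $\limsup_{n} n^{-2}\log Q_n(L_n \in A) \le -\inf_{A} I =: -I_A < 0$, whence $Q_n(L_n \in A) \le \exp(-n^2 I_A/2)$ for all $n$ large enough, which we incorporate into the choice of $n_0$; the factor $\tfrac12$ merely provides room to pass from a $\limsup$ statement to one valid for every large $n$.

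For the strong law I would first put all the $Q_n$ on a common probability space, e.g.\ $\Omega = \prod_n \Sigma^{p(n)}$ with $\P = \bigotimes_n Q_n$, under which $L_n$ is a function of the $n$-th coordinate only and has law $Q_n$. Applying the bound just obtained to each $A_k$ gives $\sum_n \P(L_n \in A_k) < \infty$, so by the Borel--Cantelli lemma $\P(L_n \in A_k \text{ infinitely often}) = 0$ for each $k$; intersecting these countably many events of full measure shows that $\P$-almost surely $d(L_n,\mu^{\ast}) \to 0$, i.e.\ $L_n \stackrel{\mathrm{weak}}{\longrightarrow} \mu^{\ast}$ with probability one. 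I do not expect a serious obstacle: the content is entirely in Theorem~\ref{ldp}, and the only points demanding a little care are the compactness argument for the strict positivity of $I_A$ and the routine construction of the common probability space so that the almost-sure statement is meaningful.
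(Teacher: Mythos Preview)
Your proposal is correct and follows essentially the same route as the paper: the paper's proof is a one-line appeal to the LDP upper bound plus Borel--Cantelli (citing \cite[Thm.~II.6.3]{Ellis:LargeDeviations}), and you have simply written out that standard argument in full, including the goodness/compactness step for $\inf_A I>0$ and the product-space construction that makes the almost-sure statement precise.
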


\begin{proof}{}
Employing the upper bound of the LDP, the strong law of large numbers follows via an application of Borel-Cantelli's
lemma, see \cite[Thm. II.6.3]{Ellis:LargeDeviations}. 
\end{proof}

\begin{example}[Disordered bosons]
Returning to the bosonic ensemble with density \eqref{densitybosonic}, we have as weight functions
$$
w_n(x)=x^{\frac{\alpha}{n}}e^{-\frac{\tau x}{n}}. 
$$ 
Now $\tau^{-1}$ is the variance of the independent and normally distributed random variables, that were used to 
construct the stability matrix $h$ for that ensemble, cf. \cite{Lueck/Sommers/Zirnbauer:2006}. We will take the variance $\tau^{-1}$ equal to 
$n^{-1}$. For a greater generality we take a sequence $(\alpha(n))_{n \in \N}$ with $\lim_{n \rightarrow \infty}\frac{\alpha(n)}{n}=\alpha>-1$. 
The case $\alpha =0$ incorporates having a constant sequence $(\alpha(n))_n$. Obviously, conditions (a1) and (a2) are met and we obtain 
for the ensemble of disordered bosons, that $(Q_n \circ  L_{n}^{-1})_n $ satisfies a LDP on $\mathcal M_1(\Sigma)$ with respect to the weak topology 
with speed $n^2$ and good rate function
\begin{eqnarray*} 
I(\mu)  =  -\frac{\kappa^2}{2} \int \int \log \left(\left|x^{2} - y^{2}\right||x - y|\right) \, \mu(dx) \, \mu(dy) \\
-\kappa \int \alpha \log x \,\,\mu(dx) + \kappa   \int  x \, \mu(dx) + \tilde{c}, 
\end{eqnarray*}
where $\mu \in \mathcal M_1(\Sigma)$ and
\begin{equation*} 
\tilde{c} :=  \lim_{n \to \infty} \frac{1}{n^2} \log \tilde{Z}_n < \infty.
\end{equation*}
In \cite{Lueck/Sommers/Zirnbauer:2006}, the authors proved the weak law of large numbers (WLLN)
$$
\E\left[\frac{1}{N} \sum_{i=1}^N 1_{\{\lambda_i\le x\}}\right] \rightarrow \int_{-\infty}^x \rho_{\infty}(t) dt \quad \mbox{for } 
N\rightarrow \infty,
$$
with $\varrho_{\infty}$ defined in \eqref{bosonrho}.
They also stated that the corresponding rate function $I$ is convex and thus, one can find a unique minimiser $\mu^{\ast}$ of $I$, 
$I(\mu^{\ast})=0$, with density $\rho_{\infty}$. Therefore the WLLN of \cite{Lueck/Sommers/Zirnbauer:2006} is extended into a strong one.
\end{example}

\begin{example}[Stieltjes-Wigert ensembles]
In the case of Stieltjes-Wigert ensembles form \cite{Tierz:2010} the weight function is $w(x) = e^{- c (\log (x))^2}$
with some constant $c$. Hence our Theorem applies and we obtain a LDP and a strong law of large numbers.
\end{example}

\begin{example}[Jacobi-, Laguerre- and Hermite biorthogonal ensembles]
In \cite{Borodin:1999}, Jacobi ensembles with $w(x)=x^{\alpha}$ on $(0,1)$ are studied. We consider the general case
$$
\Sigma = [0,1], \qquad w_n(x)= x^{\alpha/n} (1-x)^{\beta/n}, \qquad \alpha, \beta > -1.
$$
For $\theta=1$ these ensembles appear in the canonical correlation analysis, where the correlations coefficients turn out to be the square
root of the eigenvalues of a special kind of matrix. It is known that the eigenvalues of these matrices follow a joint distribution
of Jacobi type. Taking $\alpha, \beta >-1$ fixed and $\frac{p(n)}{n} \to \kappa$, the sequence $(Q_n \circ L_n^{-1})_n$ obeys a LDP with speed $n^2$ and rate function
(up to a constant)
\begin{equation} \label{grundform}
\frac{\kappa^2}{2} \int \int \bigl( \log|x^{\theta} - y^{\theta}|^{-1} +\log |x-y|^{-1} \bigr) \mu(dx)\mu(dy).
\end{equation}
For $\frac{\alpha(n)}{n} \to \alpha >-1$ and $\frac{\beta(n)}{n} \to \beta >-1$, the corresponding LDP holds true, and the rate function is (up to a constant)
\eqref{grundform} plus
$$
- \kappa \int \alpha \log x \mu(dx) - \kappa \int \beta \log(1-x) \mu(dx).
$$
Hermite ensembles can be considered with $w_n(x) = |x|^{\alpha/n} e^{- x^2/n}$ on $(-\infty,\infty)$ with $\alpha >-1$.
The case $\alpha=0$ is the classical Hermite weight. For constant $\alpha >-1$ we obtain the LDP, and the rate function is \eqref{grundform}
plus $\kappa \int x^2 \mu(dx)$ with the same $\frac{p(n)}{n} \to \kappa$. For $\frac{\alpha(n)}{n} \to \alpha >-1$, the rate is \eqref{grundform}
plus $\kappa \int x^2 \mu(dx) - \kappa \int \alpha \log x \mu(dx)$. Finally, the Laguerre biorthogonal ensembles are given by the weight function
$w_n(x)= x^{\alpha/n} e^{- x/n}$ on $(0, \infty)$ with $\alpha >-1$. It is straightforward to find the corresponding rate functions for
constant and $n$-dependent parameters. For $\theta=2$ the Laguerre case is corresponding to the matrix model of disordered bosons in \cite{Lueck/Sommers/Zirnbauer:2006}.
\end{example}

\begin{remark}
It is obvious how to generalise biorthogonal ensembles. Consider an ensembles of $n$ points on $(a,b) \subset \Bbb R$ with the joint probability
density of the form
\begin{equation} \label{detdet}
c \prod_{i=1}^n w_n(x_i)^n \,\, \det [\xi_i(x_j)]_{i,j=1}^n \,\, \det [\eta_i(x_j)]_{i,j=1}^n,
\end{equation}
where $\xi_i(x), \eta_i(x)$, $i \geq 1$, are some functions defined on $(a,b)$. In case $p(n)=n$ the density \eqref{density} is clearly
a special case of \eqref{detdet}, take $\xi_i(x) = x^{i-1}$ and $\eta_i(x) = x^{\theta(i-1)}$. It is known, see \cite{Desrosiers/Forrester:2008}
and references therein, that random matrix theory provides many instances of such biorthogonal structures, for example unitary invariant matrix ensembles or
unitary ensembles with an external source, see also \cite{Bleher/Kuijlaars:2004} and the next section. 
But large deviations principles for the corresponding empirical measures of $n$ points distributed
according to \eqref{detdet} are out of range with respect to the techniques we are using to prove Theorem \ref{ldp}.
\end{remark}

\section{Large deviations for multiple orthogonal ensembles}
Multiple orthogonal polynomials are a generalisation of orthogonal polynomials in which the orthogonality
is distributed among a number of orthogonality weights. They appear in random matrix theory in the form of special determinantal
point processes that are called multiple orthogonal polynomial (MOP) ensembles. In \cite{Kuijlaars:2010, Kuijlaars:2010b}
the appearance of MOP in a variety of random matrix models and models related with particles
following non-intersecting paths have been considered. 
To a finite number of weight functions $w_1, \ldots, w_p$ on $\Bbb R$ and a multi-index $\vec{n} =(n_1, \ldots, n_p) \in {\Bbb N}^p$ we
associate a monic polynomial $P_{\vec{n}}$ of degree $n:= |\vec{n}| := n_1 + \cdots + n_p$ such that
$$
\int_{-\infty}^{\infty} P_{\vec{n}}(x) x^k w_j(x) \, dx = 0, \quad \text{for} \,\, k=0, \ldots, n_j-1, \,\, j=1, \ldots, p.
$$
If $P_{\vec{n}}$ uniquely exists then it is called the multiple orthogonal polynomial (MOP) associated with the weights $w_1, \ldots, w_p$
and multi-index $\vec{n}$. In \cite{Kuijlaars:2010} the following result was presented. Assume that
\begin{equation} \label{condMOP}
\frac{1}{Z_n} \det [f_j(x_k)]_{j,k=1, \ldots,n} \biggl[ \prod_{1 \leq j < k \leq n} (x_k-x_j) \biggr]
\end{equation}
is a probability density function on ${\Bbb R}^n$, where the linear span of $f_1, \ldots, f_n$ is the same as the linear span
of $\{x^k w_j(x) | j=0, \ldots, n_j-1, \, j=1, \ldots, p\, \}$. Then the MOP exists and is given by
$$
P_{\vec{n}}(x) = \E \biggl[ \prod_{j=1}^n(x-x_j) \biggr],
$$
where the expectation is taken with respect to the p.d.f \eqref{condMOP}, which can be interpreted as the expectation
of the random polynomial $\prod_{j=1}^n (x-x_j)$ with roots $x_1, \ldots, x_n$ from a determinantal point process
on the real line. The p.d.f \eqref{condMOP} is called a {\it MOP ensemble}. It was first observed in \cite{Bleher/Kuijlaars:2004} that
random matrix models with an external source lead naturally to MOP ensembles. 

The weights $w_1, \ldots, w_p$ are an {\it Angelesco system} if there are disjoint intervals $\Gamma_1, \ldots, \Gamma_p \subset {\Bbb R}$, such
that $\supp (w_j) \subset \Gamma_j$, $j=1, \ldots, p$. In the Angelesco case, $\det [f_j(x_k)]_{j,k=1, \ldots,n}$ is of block form and results in
$$
\det [f_j(x_k)]_{j,k=1, \ldots,n} = \prod_{i=1}^p \biggl( \Delta(X^{(i)}) \prod_{k=1}^{n_i} w_i(x_k^{(i)}) \biggr)
$$
with $x_k^{(i)} := x_{N_{i-1}+k} \in \Gamma_i$, $N_i=\sum_{j=1}^i n_j$ (with $N_0=0$) and $X^{(i)} = (x_1^{(i)}, \ldots, x_{n_i}^{(i)})$ and
$$
\Delta(X) = \prod_{1 \leq j <k \leq n} (x_k-x_j) \quad \text{for} \,\, X=(x_1, \ldots,x_n),
$$
the Vandermonde determinant. Thus an Angelesco system gives rise to a MOP ensemble, the {\it Angelesco ensemble}, and the joint
p.d.f is
\begin{equation} \label{angelesco}
\frac{1}{Z_n} \prod_{i=1}^p \Delta(X^{(i)})^2 \prod_{1 \leq i < j \leq p} \Delta(X^{(i)}, X^{(j)}) \prod_{i=1}^p \prod_{k=1}^{n_i} w_i(x_k^{(i)}),
\end{equation}
where 
$$
\Delta(X,Y) := \prod_{k=1}^n \prod_{j=1}^m (x_k-y_j)
$$
for $X=(x_1, \ldots,x_n)$ and $Y=(y_1, \ldots, y_m)$.

We now consider the situation that $|\vec{n}| = n \to \infty$ and $n_j \to \infty$ for every $j=1, \ldots, p$ in such a way that
\begin{equation} \label{c1} 
\frac{n_j}{n} \to r_j \quad \text{for} \,\, j=1, \ldots, p
\end{equation}
with $0 < r_j <1$ and $\sum_{j=1}^p r_j =1$. Let us consider varying weights 
\begin{equation} \label{varying}
w_i(x)= e^{-n V_i(x)} 
\end{equation} 
for any $i=1, \ldots, p$. Denote by
\begin{equation} \label{n1}
L_j(x^{(j)}) := L_{j,n_j}(x^{(j)}) := \frac{1}{n_j} \sum_{k=1}^{n_j} \delta_{x_k^{(j)}}
\end{equation}
the $j$-{\it th empirical measure} of $x^{(j)}$ for every $j=1, \ldots, p$.

We will study the asymptotic behaviour of the empirical distribution vector $L_{n}(x)$ of $x$ 
for $x = (x_1, \ldots, x_{n}) \in \prod_{j=1}^p \Gamma_j^{n_j} =: \Gamma_{p,\vec{n}}$, which is defined as
$$
L_{n}(x) := (L_1(x^{(1)}), \ldots, L_p(x^{(p)})) 
$$
with $L_i(x^{(i)})$ defined as in \eqref{n1}. $L_n(x)$ is an element in $\times_{i=1}^p \cM_1(\Gamma_i^{n_i}) =: \cM_1(\Gamma_{p, \vec{n}})$.
Let $Q_n$ denote the joint distribution of random variables $(X^{(1)}, \ldots, X^{(p)})$ with density \eqref{angelesco}. 

\begin{theorem}[LDP for Angelesco ensembles] \label{ldpMOP}
Assume that every weight function $w_i$ in \eqref{varying} satisfies assumption (a1) and (a2) and assume that \eqref{c1}
is fulfilled. Then the sequence 
$\left( Q_n \circ L_{n}^{-1}\right)_n$ satisfies a
LDP on $\mathcal M_1(\Gamma_{p, \vec{n}})$ with respect to the weak topology
with speed $n^2$ and good rate function
\begin{eqnarray} \label{ratean}
I(\mu_1, \ldots, \mu_p) & = & \frac 12 \sum_{j=1}^p r_j^2 \int \int \log \left|x - y\right|^{-2} \mu_j(dx) \, \mu_j(dy) \\
&& + \sum_{j=1}^{p-1} \sum_{k=j+1}^p r_j r_k \int \int \log \left|x - y\right|^{-1} \mu_j(x) \mu_k(y) + \sum_{j=1}^p r_j \int V_j(x) \mu_j(x) + c, \nonumber
\end{eqnarray}
where $(\mu_1, \ldots, \mu_p) \in \mathcal M_1(\Gamma_{p,\vec{n}})$ and
$$
c :=  \lim_{n \to \infty} \frac{1}{n^2} \log Z_n.
$$
\end{theorem}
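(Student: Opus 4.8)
The plan is to run, in the product space $\cM_1(\Gamma_{p,\vec n})=\prod_{j=1}^{p}\cM_1(\Gamma_j)$ endowed with the product weak topology, the same argument that proves Theorem~\ref{ldp}. Taking the logarithm of \eqref{angelesco} and substituting \eqref{varying} and \eqref{c1}, one sees that, at exponential scale $n^{2}$ and on $\{x:L_n(x)=(\mu_1,\dots,\mu_p)\}$, the density is comparable to
\[
Z_n^{-1}\exp\!\bigl(-n^{2}\,\widetilde I(\mu_1,\dots,\mu_p)\bigr),
\]
where $\widetilde I$ is the functional \eqref{ratean} with the constant $c$ deleted: the blocks $\Delta(X^{(j)})^{2}$ contribute $r_j^{2}\iint\log|x-y|^{-1}\mu_j\,\mu_j=\tfrac12 r_j^{2}\iint\log|x-y|^{-2}\mu_j\,\mu_j$, the mixed blocks $\Delta(X^{(j)},X^{(k)})$ contribute $r_jr_k\iint\log|x-y|^{-1}\mu_j\,\mu_k$, and $\prod_k w_j(x_k^{(j)})=\prod_k e^{-nV_j(x_k^{(j)})}$ contributes $r_j\int V_j\,\mu_j$. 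It therefore suffices to prove that the unnormalised measures $Z_nQ_n\circ L_n^{-1}$ obey the large deviation upper and lower bounds with rate function $\widetilde I$; applying these bounds with the whole space as test set then yields the existence and finiteness of $c:=\lim_n n^{-2}\log Z_n$, together with $c=-\inf_{(\mu_j)}\widetilde I$ and $I=\widetilde I+c$, so that $I\ge 0$ and $\inf I=0$, in complete analogy with \eqref{qv12}.

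The three ingredients are those of Sections~4--5. First, \emph{exponential tightness}: if every $\Gamma_j$ is bounded this is automatic, as each $\cM_1(\Gamma_j)$ is then weakly compact; if some $\Gamma_j$ is unbounded, assumption~(a2) confines the empirical measures to a compact set up to an event of arbitrarily small exponential rate, which reduces the upper bound from closed sets to compact sets. Second, the \emph{upper bound}: for $M>0$ put $f_M(x,y):=\min\{\log|x-y|^{-1},M\}$, a bounded continuous kernel on $\R^{2}$; from $\log|u-v|^{-1}\ge f_M(u,v)$ one gets $\Delta(X^{(j)})^{2}\le\exp\!\bigl(-\sum_{k\ne l}f_M(x_k^{(j)},x_l^{(j)})\bigr)$ with $\sum_{k\ne l}f_M(x_k^{(j)},x_l^{(j)})=n_j^{2}\iint f_M\,dL_j\,dL_j-O(n)$, and the mixed blocks are handled identically (needing no truncation when $\dist(\Gamma_j,\Gamma_k)>0$, the kernel being bounded on $\Gamma_j\times\Gamma_k$); this gives the upper bound with the truncated functional, and letting $M\to\infty$ by monotone convergence upgrades it to $\widetilde I$. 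Third, the \emph{lower bound}: given $(\mu_1,\dots,\mu_p)$ with $\widetilde I<\infty$, approximate each $\mu_j$ weakly by $\mu_j^{\varepsilon}$ with bounded density supported on a compact subset of $\Gamma_j\setminus\caln(w_j)$ and $\widetilde I(\mu_1^{\varepsilon},\dots,\mu_p^{\varepsilon})\to\widetilde I(\mu_1,\dots,\mu_p)$, then place the $x_k^{(j)}$ at regularised quantiles of $\mu_j^{\varepsilon}$; boundedness of the density keeps the points separated on scale $n_j^{-1}$, so the logarithmic energy of the configuration converges to $\widetilde I$ with diagonal correction $O(n\log n)=o(n^{2})$, as in the one-ensemble case. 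Combining the three ingredients gives the LDP for $Z_nQ_n\circ L_n^{-1}$ with rate $\widetilde I$, hence Theorem~\ref{ldpMOP} after normalisation; $I$ is a good rate function because its logarithmic part is the increasing limit, hence lower semicontinuous, of the continuous functionals $(\mu_1,\dots,\mu_p)\mapsto\sum_j r_j^{2}\iint f_M\,d\mu_j\,d\mu_j+\sum_{j<k}r_jr_k\iint f_M\,d\mu_j\,d\mu_k$, the terms $r_j\int V_j\,\mu_j$ are lower semicontinuous, and exponential tightness makes the level sets compact.

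The only step that is genuinely technical is the lower bound, where one must exhibit configurations whose empirical vectors converge to a prescribed $(\mu_1,\dots,\mu_p)$ while the partly attractive, partly singular logarithmic energy converges, coping with the finitely many zeros of the weights permitted by~(a1); this, however, is exactly the obstacle already overcome in the proof of Theorem~\ref{ldp}. The single new feature of the Angelesco case---self-interaction exponent $2$ within each interval $\Gamma_j$ against interaction exponent $1$ between distinct intervals---adds nothing, precisely because the inter-block kernel lives on disjoint intervals and is therefore bounded (continuous when the $\Gamma_j$ are strictly separated), so it enters $\widetilde I$ only through a bounded continuous functional. Finally, $\widetilde I$ is strictly convex on $\prod_j\cM_1(\Gamma_j)$: for mass-zero $\sigma_j$ the associated quadratic form equals $\tfrac12$ times the logarithmic energy of $\sum_j r_j\sigma_j$ plus $\tfrac12\sum_j r_j^{2}$ times the logarithmic energy of $\sigma_j$, both nonnegative by the positivity of the logarithmic energy of compactly supported signed measures of total mass zero, and not both zero unless all $\sigma_j=0$; hence $I$ has a unique minimiser and a strong law as in Corollary~\ref{SLLN} holds.
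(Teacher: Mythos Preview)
Your proposal is correct and follows essentially the same route as the paper: Section~\ref{mopproof} explicitly only sketches the argument, writing out the function $F(x)$ in \eqref{function1}, identifying the three contributions to the rate, and remarking that the technical details (singularity of the logarithm, coarse graining for the lower bound) are handled exactly as in the proof of Theorem~\ref{ldp}. Your write-up is in fact more detailed than the paper's sketch---in particular your observation that the inter-block kernel is bounded on $\Gamma_j\times\Gamma_k$ when the intervals are separated, and your strict convexity decomposition (writing the quadratic form as $\tfrac12$ the logarithmic energy of $\sum_j r_j\sigma_j$ plus $\tfrac12\sum_j r_j^{2}$ times the energy of $\sigma_j$) to obtain uniqueness of the minimiser, are not spelled out in the paper, which leaves uniqueness as a hypothesis in Corollary~3.2.
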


\begin{corollary}
Whenever the joint density of the eigenvalues is as in \eqref{angelesco} and the rate function $I$ has a unique minimiser $\mu^{\ast}$, 
we obtain under (a1) and (a2) a strong law of large numbers,
$$
\P(L_n \stackrel{weak}{\longrightarrow} \mu^{\ast})=1.
$$ 
\end{corollary}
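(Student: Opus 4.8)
The plan is to derive the strong law from the large deviations upper bound of Theorem~\ref{ldpMOP}, following verbatim the argument used for Corollary~\ref{SLLN}. First, since each $Q_n \circ L_n^{-1}$ is a probability measure on $\mathcal M_1(\Gamma_{p,\vec n})$, applying the lower bound of the LDP to the whole space gives $\inf_\mu I(\mu) \le 0$ and the upper bound applied to the whole space gives $\inf_\mu I(\mu) \ge 0$, so $\inf_\mu I(\mu) = 0$. By hypothesis $I$ has a unique minimiser $\mu^\ast$, hence $I(\mu^\ast) = 0$ while $I(\nu) > 0$ for every $\nu \neq \mu^\ast$.

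Next, fix a metric $d$ generating the weak topology on $\mathcal M_1(\Gamma_{p,\vec n})$ (being a finite product of spaces $\mathcal M_1(\Gamma_j)$ of probability measures on intervals $\Gamma_j \subseteq \R$, this space is metrizable for the weak topology). For $\varepsilon > 0$ put $A_\varepsilon := \{\nu : d(\nu,\mu^\ast) \ge \varepsilon\}$, a closed set not containing $\mu^\ast$. Goodness of $I$ together with uniqueness of the minimiser gives $\delta(\varepsilon) := \inf_{\nu \in A_\varepsilon} I(\nu) > 0$: if it were zero, a minimising sequence would lie eventually in the compact level set $\{I \le 1\}$, so some subsequence would converge to a point $\nu_0 \in A_\varepsilon$, and lower semicontinuity would force $I(\nu_0) = 0$, contradicting $\nu_0 \neq \mu^\ast$. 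The LDP upper bound then yields
\[
\limsup_{n\to\infty} \frac{1}{n^2}\log Q_n(L_n \in A_\varepsilon) \le -\delta(\varepsilon),
\]
so that $Q_n(L_n \in A_\varepsilon) \le \exp\bigl(-n^2 \delta(\varepsilon)/2\bigr)$ for all $n \ge n_0(\varepsilon)$, and in particular $\sum_n Q_n(L_n \in A_\varepsilon) < \infty$.

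Finally, realising all the $L_n$ simultaneously as coordinate maps on the product probability space $\prod_n (\Gamma_{p,\vec n}, Q_n)$, the Borel--Cantelli lemma gives, for each fixed $\varepsilon$, that almost surely $d(L_n,\mu^\ast) < \varepsilon$ for all large $n$. Intersecting the corresponding full-measure events over $\varepsilon = 1/k$, $k \in \N$, shows $L_n \stackrel{weak}{\longrightarrow} \mu^\ast$ almost surely, i.e. $\P(L_n \stackrel{weak}{\longrightarrow} \mu^\ast) = 1$; this is precisely the argument of \cite[Thm.~II.6.3]{Ellis:LargeDeviations} invoked for Corollary~\ref{SLLN}, and an identical computation with the complement of an arbitrary open neighbourhood of $\mu^\ast$ in place of $A_\varepsilon$ also reproduces the quantitative estimate recorded there. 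The only point requiring genuine care — and hence the main, albeit mild, obstacle — is the strict positivity $\delta(\varepsilon) > 0$, which is exactly where the goodness of the rate function and the uniqueness of $\mu^\ast$ are used; everything else is routine Borel--Cantelli bookkeeping.
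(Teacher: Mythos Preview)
Your proposal is correct and follows exactly the route the paper intends: the paper gives no separate proof for this corollary, since it is identical to Corollary~\ref{SLLN}, whose proof is the one-line reference ``Employing the upper bound of the LDP, the strong law of large numbers follows via an application of Borel--Cantelli's lemma, see \cite[Thm.~II.6.3]{Ellis:LargeDeviations}.'' You have simply unpacked that reference in full detail, including the verification that $\inf_{A_\varepsilon} I > 0$ via goodness and uniqueness.
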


\begin{remark}
Hence the vector of empirical measures of a random points from the Angelesco ensembles tends to the vector of nonrandom measures
almost surely weakly. This solves the question posed in \cite[Section 5.2]{Kuijlaars:2010}. Remark that recently, in \cite{Bloom:2011},
a strong law for Angelesco ensembles was established, applying the notion of Fekete points as well as the Bernstein-Markov inequality.
We obtain a full LDP.
We would also be able to consider {\it Nikishin ensembles} with $p \geq 2$ weights, see \cite{Kuijlaars:2010} and references therein.
This is because the determinantal structure of the joint density of the eigenvalues \cite[(4.14)]{Kuijlaars:2010} consists
of Vandermonde-like products and hence the techniques of our proof of Theorem \ref{ldp} can be adapted. Nikishin interaction
arises in the asymptotic analysis of eigenvalues of {\it banded Toeplitz matrices} as well as in a {\it two-matrix model}, 
see \cite[Section 5.4]{Kuijlaars:2010}.
\end{remark}

\section{Proofs}
Now we present the proof of Theorem \ref{ldp}. It is a generalisation of arguments used
for the proofs of \cite[Theorem 2.6.1]{Zeitounibook} and \cite[Theorem 4.1]{Eichelsbacher/Stolz:2006}.
The proof of the upper bound is quite similar to the proofs in the latter references. In the proof
of the lower bound, a coarse graining argument is more involved. To overcome the singularity of the
logarithm at certain points is the most delicate part. 

Let us define $F:\Sigma \times \Sigma \rightarrow {\overline \R}$ by
\begin{displaymath}
F(x,y) := - \frac{\kappa^2}{2} \left( \log |x -y| + \log |x^{\theta} -
y^{\theta}|\right) - \frac{\kappa}{2} \bigl( \log w(x) + \log w(y) \bigr),
\end{displaymath}
where we set 
$F(x,y)=\infty$ if $x^{\theta}=y^{\theta}$ or if $\{x,y\}\cap \mathcal{N}(w)\not= \emptyset$. 
Note that due to our definition of $\Sigma$, $x^{\theta}=y^{\theta}$ corresponds to $x=y$.
Let $F^M(x,y)$ denote the truncated version of $F(x,y)$, for $ M>0$: $F^M(x,y):= F(x,y)\wedge M$. 
Furthermore, we define the functions $F_n:\Sigma \times \Sigma \rightarrow {\overline \R}$,
$$
F_n(x,y) := - \frac{1}{2}\left(\frac{p(n)}{n}\right)^2 \left( \log |x -y| + \log \left|x^{\theta} -
y^{\theta}\right|\right) - \frac{p(n)}{2n} \bigl( \log w_n(x) + \log w_n(y) \bigr),
$$
again with $F_n(x,y)=\infty$ if $x^{\theta}=y^{\theta}$ or if $\{x,y\}\cap \mathcal{N}(w_n)\not= \emptyset$
and their truncated versions
$$
F^M_n(x,y):= F_n(x,y)\wedge M,\quad M>0.
$$
Observing that
\begin{eqnarray*}
n\sum_{i=1}^{p(n)}\log w_n(x_i)
=\frac{n}{p(n)} \sum_{1\le i<j\le p(n)}\left(\log w_n(x_i)+\log w_n(x_j) \right)+\frac{n}{p(n)}
\sum_{i=1}^{p(n)}\log w_n(x_i),
\end{eqnarray*}
we can deduce from \eqref{density} and the definition of $F_n$ the following identity,
where we abbreviate $1_{\Sigma^{p(n)}}(x_1,..,x_{p(n)})$ by $1_{\Sigma^{p(n)}}(x)$,
\begin{eqnarray}
q_n(x_1,\ldots, x_{p(n)})
=\frac{1}{Z_n}\exp \left\{-\frac{2n^2}{p(n)^2} \sum_{i<j}F_n(x_i,x_j)+\frac{n}{p(n)} \sum_{i=1}^{p(n)}
\log w_n(x_i)  \right\} 1_{\Sigma^{p(n)}}(x). 
\end{eqnarray}
One key ingredient for the proof of the upper bound will be the following lemma, which also provides that 
the rate function is well defined. 

\begin{lemma} \label{bounded}
~ 
\begin{itemize}
\item[(i)]  For any $M>0$, $F^M_n(x,y)$ converges to $F^M(x,y)$
uniformly as $n \to \infty$.

\item[(ii)] $F$ is bounded from below.
\end{itemize}
\end{lemma}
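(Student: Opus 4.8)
The plan is to establish (ii) first, since the same growth estimate also drives the reduction to a compact set needed for (i). For (ii): $w$ is bounded on $\Sigma$, because it is continuous and, when $\Sigma$ is unbounded, (a2) gives $w(x)\le\sup_{n\ge n_0}w_n(x)\to 0$ as $|x|\to\infty$; hence $-\frac{\kappa}{2}\log w(x)$ and $-\frac{\kappa}{2}\log w(y)$ are bounded below. The only terms that can be large negative, $-\frac{\kappa^2}{2}\log|x-y|$ and $-\frac{\kappa^2}{2}\log|x^\theta-y^\theta|$, require $\max(|x|,|y|)$ large; assuming $|x|\ge|y|$ and $|x|\ge 1$, the crude bounds $|x-y|\le 2|x|$ and $|x^\theta-y^\theta|\le 2|x|^\theta$ give $-\frac{\kappa^2}{2}\bigl(\log|x-y|+\log|x^\theta-y^\theta|\bigr)\ge -\kappa^2\log 2-\tfrac{\kappa^2(\theta+1)}{2}\log|x|$, while (a2) yields $w(x)\le|x|^{-(\theta+1)(\kappa+\epsilon)}$ for $|x|$ large and hence $-\frac{\kappa}{2}\log w(x)\ge\tfrac{\kappa(\theta+1)(\kappa+\epsilon)}{2}\log|x|$. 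Adding, the coefficient of $\log|x|$ is $\tfrac{\theta+1}{2}\bigl(\kappa(\kappa+\epsilon)-\kappa^2\bigr)=\tfrac{(\theta+1)\kappa\epsilon}{2}>0$, so the sum is $\ge-\kappa^2\log 2$; together with the lower bound on the $y$-weight this bounds $F$ below when $\max(|x|,|y|)$ is large, and on the complementary bounded set each of the four terms is separately bounded below (at singular points $F=\infty$, so nothing to check).

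For (i), fix $M>0$; write $p(n)/n\to\kappa$, and recall $t\mapsto t\wedge M$ is $1$-Lipschitz. \emph{Step 1 (reduction to a compact set).} Running the estimate of (ii) for $F_n$ — using $\sup_{n\ge n_0}w_n(x)\to 0$ and $\sup_{n\ge n_0}\sup_\Sigma w_n<\infty$ from (a2) and (a1.2), and that $(p(n)/n)\bigl((\kappa+\epsilon)-p(n)/n\bigr)\to\kappa\epsilon>0$ — one obtains a compact $K=K_M\subset\Sigma$ and an index $n_1$ with $F_n(x,y)>M$ and $F(x,y)>M$ whenever $(x,y)\notin K\times K$ and $n\ge n_1$; there $F^M_n=F^M=M$, so it suffices to prove uniform convergence on $K\times K$. \emph{Step 2 (the singular set).} Put $D:=\{x^\theta=y^\theta\}\cup(\mathcal{N}(w)\times\Sigma)\cup(\Sigma\times\mathcal{N}(w))$, on which $F=\infty$; by (a1.1) and the fact noted in the text that $x^\theta=y^\theta$ forces $x=y$ on $\Sigma$, $D$ is a finite union of sheets. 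On $K\times K$ the terms $-\frac{p(n)}{2n}\log w_n(\cdot)$ and $-\frac{\kappa}{2}\log w(\cdot)$ are bounded below uniformly in $n\ge n_1$ because $\sup_K w_n$ is bounded; consequently, as $|x-y|\to 0$ the estimate $\log|x-y|+\log|x^\theta-y^\theta|\le 2\log|x-y|+O(1)\to-\infty$ forces $F_n,F\to+\infty$, and as $x$ approaches a zero $x_0$ of $w$ the uniform convergence $w_n\to w$ on a neighborhood of $x_0$ (a1.2) gives $-\log w_n(x)\to+\infty$ with the other terms bounded below, again forcing $F_n,F\to+\infty$. Hence there is an open $U\supseteq D$ with $F_n>M$ and $F>M$ on $U\cap(K\times K)$ for all $n\ge n_1$, where once more $F^M_n=F^M=M$.

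\emph{Step 3 (the regular part).} The set $C:=(K\times K)\setminus U$ is a compact subset of $(\Sigma\times\Sigma)\setminus D$; on it $\log|x-y|$ and $\log|x^\theta-y^\theta|$ are bounded, and by (a1.2)--(a1.3), $\log w_n\to\log w$ uniformly with $\log w$ bounded, while $p(n)/n\to\kappa$. Therefore $F_n\to F$ uniformly on $C$, and hence (by $1$-Lipschitzness of $\cdot\wedge M$) $F^M_n\to F^M$ uniformly on $C$. Combining Steps 1--3 gives $F^M_n\to F^M$ uniformly on all of $\Sigma\times\Sigma$.

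The delicate point is the interplay of Steps 1 and 2: one must exhibit a single compact exhaustion and a single neighborhood of the singular set that work simultaneously for $F$ and for every $F_n$ with $n$ large. This is exactly where the slack $\epsilon$ in (a2) is used — it guarantees that the decay exponent $(\theta+1)(\kappa+\epsilon)$ of $w^{-1}$ (and of $w_n^{-1}$ for $n\ge n_0$) dominates the $(\theta+1)\log|x|$ growth of the Vandermonde factors even though $p(n)/n$ only tends to $\kappa$ — and the truncation at $M$ is what lets one ignore, rather than control, the blow-up of $F_n$ and $F$ on those two bad regions.
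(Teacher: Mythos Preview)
Your proof is correct and follows essentially the same strategy as the paper's: isolate a ``bad'' region (large $|x|$ or $|y|$, proximity to the diagonal, proximity to $\mathcal{N}(w)$) on which both $F_n$ and $F$ exceed $M$ so that $F_n^M=F^M=M$ there, and then establish uniform convergence of $F_n$ to $F$ on the remaining compact set away from all singularities. The only cosmetic differences are that the paper proves (i) first and reads (ii) off as a by-product, and that it groups the bad region as $A_M\cup B_M$ with $A_M=\{\max(|x|,|y|)>R_M\}\cup\{\text{near }\mathcal{N}(w)\}$ and $B_M=\{|x-y|<C_M\}$, whereas you first peel off $\{\max(|x|,|y|)>R\}$ and then treat $\{x=y\}\cup(\mathcal{N}(w)\times\Sigma)\cup(\Sigma\times\mathcal{N}(w))$ together; the underlying estimates (the $(\theta+1)(\kappa+\epsilon)$ slack from (a2) against the $(\theta+1)$ growth of the Vandermonde factors, and the uniform convergence of $\log w_n$ from (a1.2)) are identical.
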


\begin{proof}{}
Since $\log |x-y| \leq \log(|x|+1) + \log(|y|+1)$ holds for any $x,y \in\R$, it
implies
\begin{eqnarray} \label{qv45}
F_n(x,y) 
\geq - \frac{p(n)}{2n} 
\left[ \log \left(
\left[(|x|+1)(|x^{\theta}|+1)\right]^{\frac{p(n)}{n}} \, w_n(x) \right) 
 +\log \left(\left[(|y|+1)(|y^{\theta}|+1)\right]^{\frac{p(n)}{n}} \, w_n(y) \right)\right]. \nonumber
\end{eqnarray}
We will show that $F^M=F_n^M=M$ on some specified sets and 
then deal with the complement of these sets.
We will start by observing that $\log \left( \left[(|x^{\theta}|+1)(|x|+1)\right]^{\frac{p(n)}{n}} \,  
w_n(x) \right)$ is bounded from above: In case of $x\in[-1,1]\cap \Sigma$ it turns out that  
$$
\log \left( \left[(|x^{\theta}|+1)(|x|+1)\right]^{\frac{p(n)}{n}} 
w_n(x) \right)\le \log \left( 4^{\frac{p(n)}{n}}\right) +\log(w_n(x)).
$$
For $M>0$ we can take $n_1$ as large, so that (a1.1) is applicable and $|\frac{p(n)}{n}-\kappa|<\epsilon$ and $|w_n(x)-w(x)|<\epsilon 
\ \forall\ n\ge n_1$ and $\forall\ x \in [-1,1]$.
For each $ \nu \in \mathcal{N}(w)$ a $\delta^{(1)}_{\nu,M}>0$ and $\forall\ x \in[-1,1]$ with $|x-\nu|<\delta^{(1)}_{\nu,M}$, we now find 
$\log \left( 4^{\frac{p(n)}{n}}\right) +\log(w_n(x))\leq \log(4^{\kappa+\epsilon})+\log(w(x))\leq -M$. 
Whereas for $|x|\geq 1$, we observe that for $n\ge n_1$ as above the following holds 
\begin{eqnarray}\label{bound}
\log \left( \left[(|x^{\theta}|+1)(|x|+1)\right]^{\frac{p(n)}{n}}  w_n(x) \right)
\leq \log \left(4^{\kappa +\epsilon}|x|^{(\theta+1)(\kappa +\epsilon)} \sup_{m\ge n} w_m(x) \right). 
\end{eqnarray}
Assumption (a2) provides that for each $M>0$ there exists $n_2\in \N$ and $R_M>0$ such that for 
$|x| \ge R_M$ and all $n\ge n_2$ we have that
$$
\log \left(4^{\kappa +\epsilon}|x|^{(\theta+1)(\kappa +\epsilon)} \sup_{m\ge n} w_m(x) \right)\leq -M.
$$
In case of $|x|\le R_M$ reasoning as in the first part yields for each $M>0$ and $\nu \in \mathcal{N}(w)$ 
the existence of $\delta^{(2)}_{\nu,M}>0$, such that for $x$ with $|x-\nu|<\delta_{\nu,M}^{(2)}$ we also get 
$\log \left(4^{\frac{p(n)}{n}}|x|^{(\theta+1)\frac{p(n)}{n}} \,w_n(x) \right)\leq-M$.
Thus, for each $M > 0$ there exists $n_0:=\max\{n_1,n_2\}$, 
$\delta_{M,\nu}:=\min\{\delta^{(1)}_{M,\nu},\delta^{(2)}_{M,\nu}\}>0$ and $R_M>0$
such that $F_n(x, y) \ge M$ holds for all $n\ge n_0$ on
\begin{equation} \label{upperlemma}
 A_M := \{ |x| \vee |y| > R_M \} \cup \bigcup_{\nu \in \caln(w) }  \{ |x - \nu| \wedge |y - \nu| <
\delta_{\nu, M} \}.
\end{equation}  
From \eqref{upperlemma} we see that $A_M^c$ is compact. Moreover, we find a constant $C_M>0$, depending on $M$, such that $F_n\ge M$ and $F\ge M\ 
\forall \ x,y \in B_M:=\{(x,y) \in \R^2: |x-y|<C_M \}$, and therefore, a constant $0<\tilde{C}_{M}<\infty$ exists with 
$\max_{x,y \in B_M^c}\{-\log|x-y|, -\log|x^{\theta}-y^{\theta}|\} <\tilde{C}_{M}$. 
This implies the existence of yet another constant $0<C_{1,M}<\infty$ such that 
$$
\max_{x,y \in B_M^c\cap A_M^c}\{|\log|x-y||, |\log|x^{\theta}-y^{\theta}||\} <C_{1,M}.
$$
Remember that $x^{\theta}=y^{\theta}$ only holds if $x=y$. 
Furthermore, we introduce the notation $\parallel f \parallel _{\infty}^D :=\sup_{x \in D} |f(x)|$,
which restricts the supremum norm to a set $D$. 
The continuity of $w_n$ and the compactness of $A_M^c$ yields that 
$\parallel \log w_n(x)+\log w_n(y)\parallel_{\infty}^{A_M^c}<C_{2,M}$, for a constant $0<C_{2,M}<\infty$.
Thus,  for any given $\eta >0$ we find on $D:=A_M^c\setminus B_M$ a $\tilde{n} \in \N$, such that $\forall\ n\ge \tilde{n}$
\begin{eqnarray}
\parallel F_n(x,y)-F(x,y) \parallel_{\infty}^{D}
\le\eta. \nonumber 
\end{eqnarray} 
Hence, we have established the uniform convergence of $F_n^M$ to $F^M$ on $A_M^c\setminus B_M$, whereas on $A_M \cup 
B_M$ we have  
$F_n^M=M$. That $F^M=M$ holds on $A_M \cup B_M$ can be shown along the same lines and (i) is proven.
Besides $F \ge M$ on $A_M\cup B_M$, we also know that $F$ is real-valued and continuous on the compact set $A^c_M \cap B_M^c$ 
and therefore $F$ is bounded from below, which yields (ii).
\end{proof}

Before proving the LDP for $(Q_n \circ L_n^{-1})_n$, we consider first the finite measure 
$$
P_n:=Z_nQ_n,
$$
where $Z_n$ is the partition function of the Lebesgue density belonging to $Q_n$, see \eqref{density}, and 
derive a LDP for $(P_n \circ L_n^{-1})$. 
We set
$$
H(\mu) := \int F d\mu^{\otimes 2},\ H^M(\mu) := \int F^M d\mu^{\otimes
2},
$$
and obtain well defined maps on $\calm_1(\Sigma)$ (due to Lemma {bounded}). We claim the following:
\begin{lemma}\label{ratefct}
$H$ is a good rate function
that governs the LDP for $(P_n \circ L_n^{-1})_n$. 
\end{lemma}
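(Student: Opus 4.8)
The plan is to establish the LDP for $(P_n \circ L_n^{-1})_n$ by the standard two-step large deviations argument, following the template of \cite[Theorem 2.6.1]{Zeitounibook} and \cite[Theorem 4.1]{Eichelsbacher/Stolz:2006}, but handling the singularities of the logarithm carefully. First I would record that $H$ is lower semi-continuous with compact level sets: since $F$ is bounded below (Lemma \ref{bounded}(ii)) and, by the truncation $F^M \uparrow F$ together with the uniform convergence in Lemma \ref{bounded}(i), $H^M$ is bounded and continuous on $\calm_1(\Sigma)$ (weak topology), the monotone limit $H = \sup_M H^M$ is lower semi-continuous. Exponential tightness, hence compactness of level sets, follows from assumption (a2): the estimate \eqref{bound} on $A_M$ forces mass to concentrate on the compact set $A_M^c$, so that $P_n \circ L_n^{-1}$ of the set of measures charging $\{|x| > R_M\}$ is superexponentially small in $n^2$.

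Next comes the upper bound. Using the rewriting of $q_n$ in terms of $F_n$, for a closed set $\mathcal{A}$ one bounds $P_n(L_n \in \mathcal{A})$ by replacing $\sum_{i<j} F_n(x_i,x_j)$ with its truncation $\sum_{i<j} F_n^M(x_i,x_j)$ (legitimate since $F_n \ge F_n^M$), then recognizing $\frac{1}{p(n)^2}\sum_{i<j} F_n^M(x_i,x_j)$ as roughly $\frac12 \int F_n^M \, dL_n^{\otimes 2}$ up to the diagonal term (which is $O(1/p(n))$ because $F_n^M \le M$). By Lemma \ref{bounded}(i), $\int F_n^M\, dL_n^{\otimes 2} \to \int F^M\, d\mu^{\otimes 2}$ uniformly in $\mu$, so a weak-neighbourhood argument plus the usual subdivision of $\mathcal{A}$ gives $\limsup \frac{1}{n^2}\log P_n(L_n \in \mathcal{A}) \le -\inf_{\mu \in \mathcal{A}} \kappa^2 H^M(\mu)/\text{(normalization)}$; letting $M \to \infty$ and invoking monotone convergence yields the bound with $H$. (One must also absorb the linear term $\frac{n}{p(n)}\sum_i \log w_n(x_i)$, which is $O(n \cdot p(n)) = o(n^2)$ uniformly on the compact support set by (a1.2).)

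The lower bound is the delicate part, and I expect the coarse-graining near the singularities of $F$ to be the main obstacle. Given an open set $\mathcal{O}$ and $\mu \in \mathcal{O}$, one first reduces to $\mu$ with $H(\mu) < \infty$, then approximates $\mu$ weakly by measures $\mu_\epsilon$ that have bounded density, are supported away from $\caln(w)$ and from any atoms, and satisfy $H(\mu_\epsilon) \to H(\mu)$ — this continuity-of-energy step requires that the logarithmic double integral not blow up, which is where one uses that $\log w$ is Lipschitz away from $\caln(w)$ (a1.3) and that $\#\caln(w) < \infty$. Then for such a regularized $\mu_\epsilon$ one partitions the support into small cells, places $n_k \approx p(n)\mu_\epsilon(\text{cell}_k)$ points in each, and estimates the probability that the configuration lands in a weak neighbourhood of $\mu_\epsilon$ from below by bounding $\prod_{i<j}|x_i - x_j|\,|x_i^\theta - x_j^\theta|$ below cell-by-cell; the within-cell Vandermonde contributions are controlled by a Selberg-type or direct volume estimate and contribute only at order $o(n^2)$, while the between-cell contributions converge to the Riemann sum $\exp(-n^2 \kappa^2 H(\mu_\epsilon) + o(n^2))$. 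Care is needed because $x^\theta - y^\theta$ can vanish off the diagonal only when $\theta$ is even and $x = -y$, but the hypothesis on $\Sigma$ rules this out, so the extra factor is comparable to a power of $|x-y|$ on each cell and causes no new difficulty. Combining the two bounds gives the LDP for $(P_n \circ L_n^{-1})_n$ with good rate function $H$, completing the proof of Lemma \ref{ratefct}.
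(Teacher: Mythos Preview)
Your proposal over-reads the scope of the lemma. In the paper, the proof of Lemma~\ref{ratefct} consists of a single sentence: it refers to \cite[Le.~3.5]{Eichelsbacher/Stolz:2006}, invoking Lemma~\ref{bounded} to supply the needed properties of $F$. That reference covers only the claim that $H$ is a \emph{good rate function} (lower semicontinuity and compact level sets); the LDP itself---upper bound, exponential tightness, lower bound---is established separately in the subsequent Subsections~4.1--4.3. Your first paragraph (writing $H=\sup_M H^M$ with each $H^M$ bounded and continuous, hence $H$ lower semicontinuous; compactness of level sets from the growth of $F$ at infinity via (a2)) is essentially the content of that cited lemma and matches the paper's intent.

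The remainder of your proposal is really a sketch of Subsections~4.1--4.3, and is broadly along the same lines (truncation to $F_n^M$ for the upper bound, coarse-graining of $\mu$ for the lower bound). One genuine slip, however: you claim the residual term $\frac{n}{p(n)}\sum_i \log w_n(x_i)$ is ``$O(n\cdot p(n))=o(n^2)$'', but $n\cdot p(n)\sim \kappa n^2$, so this would \emph{not} be $o(n^2)$. On compact sets away from $\caln(w)$ the term is actually $O(n)$, yet $\log w_n$ is unbounded near $\caln(w)$ and at infinity, so one cannot simply absorb it pointwise. The paper handles this differently, via a H\"older split into factors $(\mathrm{I})$ and $(\mathrm{II})$, with $(\mathrm{I})=\bigl(\int w_n(t)^{2n/p(n)}\,{\rm m}_{\rr}(dt)\bigr)^{p(n)/2}$ shown directly to satisfy $\frac{1}{n^2}\log(\mathrm{I})\to 0$ using (a2); this avoids any a~priori restriction to compacta and any issue near the zeros of $w$.
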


\begin{proof}{}
Exactly as the proof of  \cite[Le. 3.5]{Eichelsbacher/Stolz:2006}, since Lemma \ref{bounded} assures that $F$ has the necessary properties to mimic the proof of \cite[Le. 3.5]{Eichelsbacher/Stolz:2006}.
\end{proof} 

Moving on, we will show that $(P_n \circ L_n^{-1})_n$ fulfils a weak LDP. 

\subsection{Proof of the upper bound}
The main obstacle will be to overcome the singularities of the function $F$. 
We start by observing that  for $\Delta :=\{(x,y) \in \Sigma ^2: x=y\} $,
\begin{eqnarray}\label{empmeasure}
L_n\otimes L_n (\Delta)=\frac{1}{p(n)}
\end{eqnarray}
holds $m_{\R^{p(n)}}$-almost surely, because of the a.s. distinct eigenvalues under the product Lebesgue measure
$m_{\R^{p(n)}}$.
From \eqref{empmeasure} it follows that $Q_n$-a.s. 
\begin{equation}
\int\int_{x \not= y} F_n^M(x,y)L_n(dx)L_n(dy)=\int\int F_n^M(x,y)L_n(dx)L_n(dy)-\frac{M}{p(n)}.
\end{equation}
For a Borel set $A$ in ${\mathcal M}_1(\Sigma)$ let us define $\tilde{A}$ as follows
${\tilde A}:=\left\{ x\in \Sigma^{p(n)}:L_n(x)\in A\right\}$.
Due to the symmetry of $F_n$ in its arguments we find
$
2\sum_{i<j}F_n(x_i,x_j)=\sum_{i\not=j}F_n(x_i,x_j)$, 
and since $m_{\R^{p(n)}}$ is the product Lebesgue measure on $\R^{p(n)}$, we 
obtain via H\"older's inequality
\begin{eqnarray} \label{hoelder}
P_n \bigl( L_{n} \in A \bigr) 
& \leq & \biggl(  \int  \exp \left\{\frac{2 n}{p(n)} \log w_{n}(t)
\right\} \, {\rm m}_{\rr}(dt) \biggr)^{\frac{p(n)}{{\tt 2}}} \nonumber \\ 
& &\left( \int_{\tilde{A}} \exp \left\{ - \frac{2 n^2}{p(n)^2}
\sum_{i \neq j} F_n^M(x_i, x_j) \right\} {\rm m}_{\rr^{p(n)}}(dx)
\right)^{1/2}
 \nonumber\\
&=& {\rm (I)} \times {\rm (II)}. \nonumber
\end{eqnarray}
First, we show $\lim_{n\rightarrow \infty} \frac{1}{n^2}\log {\rm (I)}=0$.
Since $w_n$ is continuous, hence bounded on compact sets, the key observation is that (a2) implies
\begin{eqnarray*}
\int_{[-K,K]^c}  w_n(t)^{\frac{2n}{p(n)}}\ {\rm m}_{\rr}(dt)&=&\int_{[-K,K]^c}   w_n(t)^{\frac{2n}{p(n)}} 
\left(\frac{|t|^{(\theta +1)(\kappa  +\epsilon)}}{|t|^{(\theta +1)(\kappa  +\epsilon)}}
\right)^{\frac{2n}{p(n)}} {\rm m}_{\rr}(dt) \\
&\le& \int_{[-K,K]^c} \frac{1}{|x|^{1+\eta}}{\rm m}_{\rr}(dt),\nonumber 
\end{eqnarray*}
for suitable $K, \eta > 0$ and $n$ large enough. Concerning (II), we find that for any $M>0$
\begin{eqnarray}
{\rm (II)}  &=& \left\{ \int_{\tilde{A}} \exp \left( - 2 n^2 \left( L_{n}(x)^{\otimes 2}(F_n^M) - 
\frac{M}{p(n)} \right) \right) {\rm m}_{\rr^{p(n)}}(dx)
\right\}^{1/2} \nonumber \\
&\le& \left\{ \exp \left( - 2 n^2 \left( \inf_{\mu \in A} \mu^{\otimes 2}(F_n^M) - \frac{M}{p(n)} \right) 
\right) \right\}^{1/2} \nonumber \\
&=& \exp\left(- n^2  \inf_{\mu \in A} \mu^{\otimes 2}(F_n^M)\right) \exp\left( \frac{M n^2}{p(n)}\right).
\end{eqnarray}
The first part of Lemma \ref{bounded} yields,
$$
\lim_{n \to \infty} \biggl( \inf_{\mu \in A} \mu^{\otimes
2}(F_n^M) \biggr) = \inf_{\mu \in A} \mu^{\otimes 2}(F^M).
$$
We have thus shown that for any Borel set $A \subset \mathcal
M_1(\Sigma)$ one has
\begin{equation} \label{qv18}
\limsup_{n \to \infty} \frac{1}{n^2} \log P_n \bigl( L_{n} \in A
\bigr) \leq - \inf_{\mu \in A} \int \int F^M(x,y) \, \mu(dx)
\mu(dy).
\end{equation}
The inequality \eqref{qv18} allows us to proof the following lemma.
\begin{lemma}\label{exptight}
$(P_n \circ L_n^{-1})_n$ is exponentially tight.
\end{lemma}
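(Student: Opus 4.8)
The plan is to exhibit, for every $L>0$, a compact set $\calk_L\subset\calm_1(\Sigma)$ such that
$\limsup_{n\to\infty}\frac1{n^2}\log P_n(L_n\notin\calk_L)\le -L$. By Prohorov's theorem in $\calm_1(\Sigma)$ with the weak topology, a natural family of compact sets is obtained by controlling tail mass: fix a sequence $R_k\uparrow\infty$ and set
$$
\calk_L:=\bigl\{\mu\in\calm_1(\Sigma): \mu\bigl([-R_k,R_k]^c\bigr)\le \tfrac1k\ \text{for all }k\ge1\bigr\},
$$
which is weakly compact (its closure is, and the defining conditions are closed). Then $\{L_n\notin\calk_L\}\subseteq\bigcup_{k\ge1}\{L_n([-R_k,R_k]^c)>1/k\}$, so it suffices to bound each event $\{L_n([-R,R]^c)>\delta\}$ for suitable $R,\delta$ and then choose $R_k$ growing fast enough that the union over $k$ is controlled by a convergent geometric series (with the $\frac1{n^2}\log$ of a sum being the max of the terms in the limit).

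The estimate on $\{L_n([-R,R]^c)>\delta\}$ comes directly from the already-proven inequality \eqref{qv18}: take $A:=A_{R,\delta}:=\{\mu: \mu([-R,R]^c)>\delta\}$, which is a Borel (indeed closed) subset of $\calm_1(\Sigma)$, so that
$$
\limsup_{n\to\infty}\frac1{n^2}\log P_n(L_n\in A_{R,\delta})\le -\inf_{\mu\in A_{R,\delta}}\mu^{\otimes2}(F^M).
$$
The point is then to show that the right-hand side tends to $-\infty$ as $R\to\infty$, uniformly in a way that lets us absorb the $1/k$ and the countable union. Here I would use the lower bound on $F$ from Lemma \ref{bounded}(ii), together with the growth bound built into assumption (a2): recall from the proof of Lemma \ref{bounded} that for $|x|\ge R_M$ one has $F_n(x,y)\ge M$, and similarly $F(x,y)\ge M$, because the weight $w_n$ decays faster than $|x|^{-(\theta+1)(\kappa+\epsilon)}$. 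Concretely, writing $F\ge -c_0$ globally and $F(x,y)\ge M$ whenever $|x|\vee|y|\ge R_M$, for $\mu\in A_{R_M,\delta}$ we split $\mu^{\otimes2}$ over the four quadrants determined by $\{|x|\ge R_M\}$ and $\{|y|\ge R_M\}$ and get
$$
\mu^{\otimes2}(F^M)\ \ge\ \bigl(1-(1-\delta)^2\bigr)\,M\ +\ (1-\delta)^2(-c_0)\ \ge\ \delta M - c_0,
$$
since on the complement of the "inner square" $F^M$ is already truncated at $M$. Hence $\inf_{\mu\in A_{R_M,\delta}}\mu^{\otimes2}(F^M)\ge \delta M-c_0\to\infty$ as $M\to\infty$.

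Putting this together: given $L>0$, for each $k$ pick $M_k$ with $\delta_k M_k - c_0\ge L+k$ where $\delta_k:=1/k$, and set $R_k:=R_{M_k}$ (the radius from the proof of Lemma \ref{bounded} corresponding to $M_k$). Then
$$
\limsup_{n\to\infty}\frac1{n^2}\log P_n\bigl(L_n([-R_k,R_k]^c)>\tfrac1k\bigr)\le -(L+k),
$$
and since $\limsup\frac1{n^2}\log\sum_k(\cdot)=\max_k\limsup\frac1{n^2}\log(\cdot)$ for a countable sum of this type (the tail terms being summable), we obtain $\limsup_{n\to\infty}\frac1{n^2}\log P_n(L_n\notin\calk_L)\le -L$, which is exactly exponential tightness. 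I expect the main obstacle to be the bookkeeping in the last step — making the passage from the single-$R$ estimate to the countable intersection defining $\calk_L$ rigorous, i.e. justifying the interchange of $\limsup\frac1{n^2}\log$ with the countable union and checking that $\calk_L$ is genuinely weakly compact (not merely relatively compact) — rather than any analytic difficulty, since all the needed decay is already encoded in (a2) and Lemma \ref{bounded}.
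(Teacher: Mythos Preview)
The paper omits the proof of this lemma, deferring to \cite[p.~12]{Eichelsbacher/Stolz:2006}; your outline is precisely the standard route taken there, namely building Prohorov compacts from tail conditions and feeding in the uniform estimate \eqref{qv18} together with the growth information encoded in Lemma~\ref{bounded}.

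Two comments. First, a minor slip: $A_{R,\delta}=\{\mu:\mu([-R,R]^c)>\delta\}$ is \emph{open} in the weak topology (the map $\mu\mapsto\mu(U)$ is lower semicontinuous for open $U$), not closed. This is harmless, since \eqref{qv18} is stated for arbitrary Borel sets.

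Second, your instinct about the last step is correct and worth emphasising: the identity $\limsup_n\frac1{n^2}\log\sum_k a_{n,k}=\sup_k\limsup_n\frac1{n^2}\log a_{n,k}$ fails in general for infinite sums, so invoking \eqref{qv18} as a black box for each $k$ (which produces a $k$-dependent threshold in $n$) does not suffice by itself. The fix, as in \cite{Eichelsbacher/Stolz:2006}, is to return to the non-asymptotic H\"older bound that precedes \eqref{qv18} and to note that, by (a2), the pointwise bounds $F_n\ge -c_0$ and $F_n(x,y)\ge M$ on $\{|x|\vee|y|>R_M\}$ hold for all $n\ge n_0$ with a \emph{single} $n_0$ independent of $M$ (only $R_M$ varies with $M$). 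Your quadrant estimate $\mu^{\otimes 2}(F_n^M)\ge\delta M-c_0$ then becomes a genuine inequality uniformly in $n\ge n_0$, and after absorbing the diagonal correction $Mn^2/p(n)$ the tail of the sum over $k$ can be handled directly. So the obstacle is exactly where you placed it, and it is indeed bookkeeping rather than a missing idea.
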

\begin{proof}{}
Very similar to \cite[p.12]{Eichelsbacher/Stolz:2006} so that we omit the proof here.
\end{proof}
Now we choose for $A$ the set $B(\mu, \delta)=\{\nu \in {\mathcal M}_1(\Sigma): d(\nu,\mu)\leq 2\delta\}$ with 
$\delta>0$ and 
$$ d(\mu,\nu)=\sup \left| \int fd\mu -\int fd\nu \right|$$ where the supremum is taken over 
all Lipschitz functions $f$ for which the sum of the Lipschitz constant $l_f$ and of the uniform bound 
$\parallel f \parallel_{\infty}$ is less than or equal to 1. 
That distance is compatible with the weak topology, see \cite[p.356]{Dembo/Zeitouni:LargeDeviations}. 
Since $\mu \mapsto H^M(\mu)$ is weakly continuous, from \eqref{qv18} we obtain for any $\mu \in \mathcal 
M_1(\Sigma)$,
\begin{eqnarray*}
\inf_{\delta \to 0} \limsup_{n \to \infty} \frac{1}{n^2} \log P_n
\bigl( L_{n} \in B(\mu, \delta) \bigr) &\stackrel{\eqref{qv18}}{\leq}& -\inf_{\delta \to 0} 
\inf_{\nu \in B(\mu,\delta)} \int \int F^M(x,y)\ \nu(dx)\nu(dy)
\nonumber \\ &\leq& - \int \int F^M(x,y) \,\mu(dx)  \mu(dy) =-H^M(\mu).
\end{eqnarray*}
Finally, letting $M$ go to infinity, we obtain the following upper
bound,
\begin{equation}\label{upperbound}
\inf_{\delta \to 0} \limsup_{n \to \infty} \frac{1}{n^2} \log P_n
\bigl( L_{n} \in B(\mu, \delta) \bigr) \leq - H(\mu). 
\end{equation}

\subsection{Proof of the lower bound}
Turning to the lower bound, we take $B(\mu, \delta)$ as above and for every $\mu \in \mathcal M_1(\Sigma)$ 
we will show a lower bound,
\begin{eqnarray} \label{lowbound}
\inf_{\delta>0} \liminf_{n\rightarrow \infty} \frac{1}{n^2}\log P_n(L_n\in B(\mu,\delta)) \geq 
-H(\mu).
\end{eqnarray}
As in \cite{BenArous/Guionnet:1997} and \cite{Eichelsbacher/Stolz:2006} we can assume w.l.o.g. that

(i) $\mu$ has no atoms,\\
(ii) ${\mathcal S}:=\supp (\mu)$ is a compact subset of $\Sigma$ with ${\mathcal S}\cap 
({\mathcal N}(w)\cup \{0\})=\emptyset.$

The main idea of the proof of the lower bound, is to localise the eigenvalues in small sets, and benefit from the speed $n^2$, which provides
that the small volumes of these sets can be neglected. We will divide ${\mathcal S}$ as follows.
For $j=1,\ldots, p(n)$ let $\xi_j=\xi_j^{(n)}$ the $\frac{p(n)+1-j}{p(n)}$ quantile of $\mu$ and set  
$\xi^{(n)}=(\xi_{p(n)},\ldots,\xi_1)$, $\xi_{p(n)+1}:=\inf {\mathcal S}$ and $\xi_0=\xi_1+1$. 
Therefore, $\mu(]-\infty,\xi_{p(n)+1}])=0=\mu([\xi_{1},\infty[)$. 
Due to assumption (ii), we have 
\begin{displaymath}
-\infty<\xi_{p(n)+1}<\xi_{p(n)}<\cdots<\xi_1<\xi_0<\infty. 
\end{displaymath} 
For $\delta > 0$, $t \in \rr^{p(n)}$ write
\begin{itemize}
\item $\pi_n(t) := \{ i = 1, \ldots, p(n):\ t_i \ge 0\},\ \nu_n(t)
:= \{ 1, \ldots, p(n)\} \setminus \pi_n(t)$,
\item $I_n(\delta) :=  \{ i = 1, \ldots, p(n):\ |\xi_i^{(n)} - \xi_{i+1}^{(n)}| \le \delta \}$,
\item
$\cali_j^{(n)}(t, \delta) := [t_j - \delta, t_j + \delta] \cap \Sigma,\ j = 1,
\ldots, p(n)$, \item $$ \calj_j^{(n)}(t, \delta) := \left\{
\begin{array}{ll} ~[t_j,  t_j + \delta]& {\rm  for}\ j \in \pi_n(t),\\
~[t_j - \delta,  t_j ] & {\rm  for}\ j \in \nu_n(t),
\end{array}\right. $$
\item ${\mathbb I}_n(t, \delta) := \prod_{j=1}^{p(n)} 
\cali_j^{(n)}(t, \delta)$ 
\item
${\mathbb J}_n(t, \delta) := \prod_{j=1}^{p(n)} \calj_j^{(n)}(t,
\delta).$
\end{itemize}
We may assume that $\delta$ is fix with $0<\delta\leq \frac{1}{\theta^{\frac{1}{\theta-1}}}$.
Set 
$$
\phi_j^{(n)} := \phi_j^{(n, \delta)} := \inf\{ w_n(x):\ x \in
[\xi_{j} - \delta , \xi_{j} + \delta] \cup [\xi_{j+1}, \xi_{j-1}] 
\}, \ j = 1, \ldots, p(n),
$$ 
and analogously define $\phi_j$ when $w_n(x)$ is replaced by $w(x)$ in the above definition. 
Write $\psi_n$, resp. $\psi$ for the step function which equals $\phi_j^{(n)}$ resp. $\phi_j$ on
$]\xi_{j+1}, \xi_j]$ and is zero elsewhere
\begin{equation}\label{psi}
\psi_n=\sum_{j=1}^{p(n)} \phi_j^{(n)}1_{]\xi_{j+1}, \xi_j]}, \mbox{ resp. }
\psi=\sum_{j=1}^{p(n)} \phi_j1_{]\xi_{j+1}, \xi_j]}.
\end{equation}
Moreover,  we have: 
\begin{lemma}
\label{delta.und.n}
$\exists\ n_0 \in \N$ such that $\forall\ n>n_0: {\mathbb I}_n(\xi^{(n)}, \delta) \subset \{ x
\in \Sigma^{p(n)}:\ L_n(x) \in B(\mu, 2\delta) \},$ where
$B(\mu,2\delta)=\{\nu \in {\mathcal M}_1(\Sigma): d(\nu,\mu)\leq 2\delta\}$ with  
$$ d(\mu,\nu)=\sup_{\{f Lipschitz:\ l_f+\parallel f \parallel_{\infty}\leq1\}} \left| \int fd\mu 
-\int fd\nu \right|$$ as above. 
\end{lemma}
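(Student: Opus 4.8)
The plan is to verify directly that any configuration $x \in {\mathbb I}_n(\xi^{(n)}, \delta)$ has empirical measure $L_n(x)$ within distance $2\delta$ of $\mu$ in the metric $d$. Recall that $x \in {\mathbb I}_n(\xi^{(n)}, \delta)$ means $x_j \in [\xi_j - \delta, \xi_j + \delta] \cap \Sigma$ for every $j = 1, \ldots, p(n)$. Fix a test function $f$ that is Lipschitz with $l_f + \| f\|_\infty \le 1$; in particular $l_f \le 1$. I would estimate $|\langle L_n(x), f\rangle - \langle \mu, f\rangle|$ by inserting the step measure $\tilde\mu_n := \frac{1}{p(n)} \sum_{j=1}^{p(n)} \delta_{\xi_j}$ as an intermediate object and bounding the two resulting differences separately.

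For the first difference, since $x_j$ and $\xi_j$ lie within $\delta$ of one another and $f$ is $1$-Lipschitz, one has $|f(x_j) - f(\xi_j)| \le \delta$ for every $j$, hence
$$
\bigl| \langle L_n(x), f \rangle - \langle \tilde\mu_n, f\rangle \bigr| \le \frac{1}{p(n)} \sum_{j=1}^{p(n)} |f(x_j) - f(\xi_j)| \le \delta.
$$
For the second difference I would use the defining property of the quantiles: $\xi_j$ is the $\frac{p(n)+1-j}{p(n)}$-quantile of $\mu$, and $\mu(]-\infty, \xi_{p(n)+1}]) = 0 = \mu([\xi_1, \infty[)$, so the points $\xi_{p(n)+1} < \xi_{p(n)} < \cdots < \xi_1$ partition $\supp(\mu)$ into $p(n)$ consecutive blocks $]\xi_{j+1}, \xi_j]$ each carrying $\mu$-mass exactly $1/p(n)$. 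Thus $\tilde\mu_n$ is a quantile approximation of $\mu$, and a standard estimate gives
$$
\bigl| \langle \tilde\mu_n, f\rangle - \langle \mu, f\rangle \bigr| = \Bigl| \sum_{j=1}^{p(n)} \int_{]\xi_{j+1}, \xi_j]} \bigl( f(\xi_j) - f(t) \bigr)\, \mu(dt) \Bigr| \le \sum_{j=1}^{p(n)} l_f \, (\xi_j - \xi_{j+1}) \cdot \frac{1}{p(n)} \le \frac{\xi_1 - \xi_{p(n)+1}}{p(n)},
$$
which tends to $0$ as $n \to \infty$ because $\mathcal S$ is compact (assumption (ii)); in particular it is $\le \delta$ for all $n$ larger than some $n_0$. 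Combining the two bounds via the triangle inequality yields $d(L_n(x), \mu) \le 2\delta$ for all $n > n_0$, uniformly in $x \in {\mathbb I}_n(\xi^{(n)}, \delta)$, which is exactly the claimed inclusion.

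I do not expect any serious obstacle here; the statement is essentially bookkeeping about quantile partitions and the Lipschitz metric. The one point requiring a little care is that the bound on the second difference must be made uniform in $n$, which is why one needs the compactness of $\mathcal S$ to control $\xi_1 - \xi_{p(n)+1}$ independently of $n$ and thereby extract a single threshold $n_0$. (One should also note that a Lipschitz $f$ on $\Sigma$ extends to a Lipschitz function on $\R$ with the same constant, so that evaluating $f$ at the $\xi_j$ and at points of $\Sigma$ causes no difficulty, and that $f$ being bounded is used only implicitly through $l_f \le 1$.)
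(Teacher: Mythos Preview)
Your proof is correct and follows essentially the same route as the paper: both bound $d(L_n(x),\mu)$ by the triangle inequality through the intermediate measure $L_n(\xi^{(n)}) = \tilde\mu_n$, obtaining $\delta$ for the first piece from the Lipschitz condition and $\delta$ for the second piece from the quantile structure. The only difference is that where the paper cites \cite[Le.~3.3]{BenArous/Guionnet:1997} for the bound $d(L_n(\xi^{(n)}),\mu)\le\delta$, you write out the explicit telescoping estimate $\sum_j l_f(\xi_j-\xi_{j+1})/p(n)\le \mathrm{diam}(\mathcal S)/p(n)$ directly, which makes your argument self-contained.
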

\vspace{-0.2cm} 
\begin{proof}{}
Let $y=(y_p(n),\ldots, y_1(n))\in {\mathbb I}_n(\xi^{(n)}, \delta)$. Thus, $y_i\in [\xi_i-\delta,\xi_i+\delta]$ and
\begin{equation}
d(L_n(y),L_n(\xi^{(n)}))
\leq \sup_{f Lip. \atop l_f+\parallel f \parallel _{\infty}\le 1} \frac{1}{p(n)}
\sum_{i=1}^{p(n)}\left|f(y_i)-f(\xi_i) \right|\leq \left| y_i-\xi_i\right|\leq \delta.\ \nonumber
\end{equation}
Applying \cite[Le. 3.3]{BenArous/Guionnet:1997} finally yields 
$d(L_n(y),\mu)\leq d(L_n(y),L_n(\xi^{(n)}))+d(L_n(\xi^{(n)}),\mu) \leq 2\delta.$
\end{proof}
Now for $n\geq n_0$, we get from Lemma \ref{delta.und.n} and ${\mathbb I}_n(\xi, \delta) \supset {\mathbb J}_n(\xi, \delta)$ 
the inequality
\begin{equation}\label{piece1} 
P_n ( L_n \in B(\mu,2\delta) ) \stackrel{Le. \ref{delta.und.n}}{\ge} Z_n Q_n ({\mathbb I}_n(\xi, \delta))
\ge \prod_{i=1}^{p(n)} \left(\phi^{(n)}_i\right)^n \int_{{\mathbb J}_n(\xi, \delta)} 
\prod_{i < j} \left|x_i^{\theta} - x_j^{\theta}\right||x_i - x_j| \ {\rm m}_{p(n)}(dx).
\end{equation}
Introducing the notation
$$ 
\R^{d,+}:=\left\{x \in \R^{d}| x_1>x_2>\ldots>x_d\right\}
$$
we focus on the last integral,
\begin{eqnarray}\label{integral}
&&\int_{{\mathbb J}_n(\xi, \delta)} \prod_{i < j} \left|x_i^{\theta} - x_j^{\theta}\right||x_i - x_j| \ 
{\rm m}_{p(n)}(dx)  \nonumber \\
&=&\int_{[- \delta, 0]^{\# \nu_n(\xi)} \times   [0, \delta]^{\# \pi_n(\xi)} }
\prod_{i < j} | (x_i+ \xi_i) - (x_j+ \xi_j)|\left|(x_i+ \xi_i)^{\theta}-(x_j+ \xi_j)^{\theta}\right| \ 
{\rm m}_{p(n)}(dx)\nonumber
\\ &\ge& \int_{ \left( [- \delta, 0]^{\# \nu_n(\xi)} \times   [0, \delta]^{\# \pi_n(\xi)}\right)
\cap \rr^{p(n), +}}
\prod_{i < j} | (x_i+ \xi_i) - (x_j+ \xi_j)| \left|(x_i+ \xi_i)^{\theta}-(x_j+ \xi_j)^{\theta}\right| \ 
{\rm m}_{p(n)}(dx) \nonumber \\
 &=& \int_{ \left( [- \delta, 0]^{\# \nu_n(\xi)} \times   [0, \delta]^{\# \pi_n(\xi)} \right)
\cap \rr^{p(n), +}} \left[
\prod_{i < j-1} |x_i- x_j  +\xi_i-\xi_j| \left|(x_i+ \xi_i)^{\theta}-(x_j+ \xi_j)^{\theta}\right| \right. \nonumber \\
&& \left.\prod_{i=1}^{p(n)-1} | x_i-x_{i+1}+ \xi_i- \xi_{i+1}| \left|(x_i+ \xi_i)^{\theta}
-(x_{i+1}+ \xi_{i+1})^{\theta}\right|\right] \ 
{\rm m}_{p(n)}(dx).
\end{eqnarray}
On $\rr^{p(n), +}$, the following inequalities hold,
\begin{eqnarray}\label{ineq1}    
\prod_{i < j-1} |x_i- x_j  +\xi_i-\xi_j|\ge \prod_{i < j-1} |\xi_i-\xi_j|
\end{eqnarray}
and as it can easily be seen (e.g. from $ab<(a+b)^2$, for $a,b>0$) 
\begin{eqnarray}\label{ineq2}
\prod_{i=1}^{p(n)-1} | x_i-x_{i+1}+ \xi_i- \xi_{i+1}|\ge \prod_{i=1}^{p(n)-1} 
|x_i-x_{i+1}|^{\frac{1}{2}}|\xi_i- \xi_{i+1}|^{\frac{1}{2}}.
\end{eqnarray}
Next, we will show that 
\begin{equation}\label{thetaineq}
\left|(x_i+ \xi_i)^{\theta}-(x_j+ \xi_j)^{\theta}\right|\ge 
\left|x_i^{\theta}+ \xi_i^{\theta}-x_j^{\theta}- \xi_j^{\theta}\right|,
\end{equation}
 where the modulus can be omitted since both terms are 
greater than 0 on $\rr^{p(n), +}$ due to our choice of the $\xi_i$.  
If $\xi_i> \xi_j>0$ and $x_i > x_j>0$, we find that      
\begin{equation} \label{bin. Satz}
(x_i+ \xi_i)^{\theta}-(x_j+ \xi_j)^{\theta}=  x_i^{\theta}+ \xi_i^{\theta}-x_j^{\theta}- \xi_j^{\theta}
+\sum_{k=1}^{\theta-1} {\theta \choose k} \overbrace{\left(x_i^k\xi_i^{\theta-k}-x_j^k\xi_j^{\theta-k}\right)}^{\ge 0}
>x_i^{\theta}+ \xi_i^{\theta}-x_j^{\theta}- \xi_j^{\theta}.
\end{equation}
In case of $0>\xi_i > \xi_j$ and $0>x_i > x_j$, we first recall that this only can happen while $\theta$ is odd.
Nevertheless, \eqref{bin. Satz} also holds: Within the sum, either $\theta-k$ is odd and $k$ is even or vice versa. 
Hence, if $k$ is even $0>\xi_i^{\theta-k}x_i^k > \xi_j^{\theta-k}x_i^k
> \xi_j^{\theta-k}x_j^k$, while the other case follows in an analogous way.
The last case to consider is $\xi_i>0>\xi_j$ and $x_i>0>x_j$. This case also only comes up for $\theta$ being odd, and
we find $x_i^k\xi_i^{\theta-k}>0$ and $-x_j^k\xi_j^{\theta-k}>0$ since either $k$ is even or $\theta-k$.
Thus, we have proven \eqref{thetaineq} and  this can on one hand further be used to obtain either
\begin{equation}\label{ineq3}
x_i^{\theta}+ \xi_i^{\theta}-x_j^{\theta}- \xi_j^{\theta} > \left(x_i^{\theta}-x_j^{\theta}\right)^{\frac{1}{2}}
\left(\xi_i^{\theta}-\xi_j^{\theta}\right)^{\frac{1}{2}}
\end{equation}
or on the other hand yield
\begin{equation}\label{ineq4}
x_i^{\theta}+ \xi_i^{\theta}-x_j^{\theta}- \xi_j^{\theta} > \xi_i^{\theta}-\xi_j^{\theta},
\end{equation}
since $x_i^{\theta}>x_j^{\theta}$ on $\R^{p(n),+}$ for $\theta \in \N$.\\
Putting together \eqref{ineq1}, \eqref{ineq2}, \eqref{ineq3} and \eqref{ineq4}, we can bound 
\eqref{integral} from below by
\begin{eqnarray}\label{letztint}
&&\prod_{i<j-1}\left[(\xi_i-\xi_j)\left(\xi_i^{\theta}-\xi_j^{\theta}\right)\right]\prod_{i=1}^{p(n)-1}
\left[(\xi_i-\xi_{i+1})^{\frac{1}{2}}\left(\xi_i^{\theta}-\xi_{i+1}^{\theta}\right)^{\frac{1}{2}}\right]\times\nonumber \\
&&\int_{\left( [- \delta, 0]^{\# \nu_n(\xi)} \times   [0, \delta]^{\# \pi_n(\xi)} \right)
\cap \rr^{p(n), +}}\prod_{i=1}^{p(n)-1}
\left[(x_i-x_{i+1})^{\frac{1}{2}}\left(x_i^{\theta}-x_{i+1}^{\theta}\right)^{\frac{1}{2}}\right]{\rm m}_{p(n)}(dx).
\end{eqnarray}
Applying the mean value theorem for $\theta\ge 2$ (while for $\theta$=1 the following inequality will be an
equality) now yields $x_i^{\theta}-x_{i+1}^{\theta}=(x_i-x_{i+1})\theta \zeta^{\theta-1},$
with $\zeta \in [x_{i+1},x_i].$ Since $|x_i|\le \delta\ \forall\ i$ and  $\delta<\frac{1}{\theta^{\frac{1}{\theta-1}}}$,
we easily see that $0<\theta \zeta^{\theta-1}<1$. 
Therefore, $0<x_i^{\theta}-x_{i+1}^{\theta}\le x_i-x_{i+1}$ and we obtain 
\begin{eqnarray}
&&\int_{\left( [- \delta, 0]^{\# \nu_n(\xi)} \times   [0, \delta]^{\# \pi_n(\xi)} \right)
\cap \rr^{p(n), +}}\prod_{i=1}^{p(n)-1}
\left[(x_i-x_{i+1})^{\frac{1}{2}}\left(x_i^{\theta}-x_{i+1}^{\theta}\right)^{\frac{1}{2}}\right]{\rm m}_{p(n)}(dx)
\nonumber \\
&\ge& \int_{\left( [- \delta, 0]^{\# \nu_n(\xi)} \times   [0, \delta]^{\# \pi_n(\xi)} \right)
\cap \rr^{p(n), +}}\prod_{i=1}^{p(n)-1}\left(x_i^{\theta}-x_{i+1}^{\theta}\right){\rm m}_{p(n)}(dx)\nonumber \\
&=&\frac{1}{\theta^{p(n)}}\int_{\left( [- \delta^{\theta}, 0]^{\# \nu_n(\xi)} \times    
[0, \delta^{\theta}]^{\# \pi_n(\xi)} \right)\cap \rr^{p(n), +}}\prod_{i=1}^{p(n)-1}\left(x_i-x_{i+1}\right)
\prod_{i=1}^{p(n)}|x_i|^{\frac{1-\theta}{\theta}}{\rm m}_{p(n)}(dx) \label{end1}\\
&\ge&\frac{1}{\theta^{p(n)}}\int_{\left( [- \delta^{\theta}, 0]^{\# \nu_n(\xi)} 
\times   [0, \delta^{\theta}]^{\# \pi_n(\xi)} \right)
\cap \rr^{p(n), +}}\prod_{i=1}^{p(n)-1}\left(x_i-x_{i+1}\right){\rm m}_{p(n)}(dx). \label{end2}
\end{eqnarray}
The equality \eqref{end1} is a simple application of the transformation formula, while the inequality \eqref{end2} holds, 
because $\theta$ is an integer and $\delta<1$, so that $|x_i|^{\frac{1-\theta}{\theta}}\ge 1$.
For the last integral, we substitute $u_{p(n)}=x_{p(n)}, u_{i-1}=x_{i-1}-x_{i}, i=p(n),\ldots,2$ and obtain 
\begin{eqnarray}\label{piece2}
\int_{\left( [- \delta^{\theta}, 0]^{\# \nu_n(\xi)} \times   [0, \delta^{\theta}]^{\# \pi_n(\xi)} \right)
\cap \rr^{p(n), +}}\prod_{i=1}^{p(n)-1}\left(x_i-x_{i+1}\right){\rm m}_{p(n)}(dx)
=\int_{U_n}\prod_{i=2}^{p(n)} u_i\, {\rm m}_{p(n)}(dx),\nonumber\\
\ge\int_{\left[0,\frac{\delta^{\theta}}{p(n)}\right]^{p(n)}}\prod_{i=2}^{p(n)} u_i\ {\rm m}_{p(n)}(dx)= 
\left(\frac{1}{2}\right)^{p(n)-1}\left(\frac{\delta^{\theta}}{p(n)}\right)^{2(p(n)-1)+1},
\end{eqnarray}
where 
$
U_n=\left\{u=(u_{p(n)},\ldots,u_1)|u_{p(n)} 
\in[-\delta^{\theta},\delta^{\theta}],u_i\in [0,\delta^{\theta}], i=1,\ldots,p(n)-1:
|\sum_{i=1}^{p(n)} u_i|\le \delta^{\theta} \right\}.
$ 
From \eqref{piece1}, \eqref{letztint} and \eqref{piece2} we get,
\begin{eqnarray}
\hspace{-0.15in}\frac{1}{n^2}\log P_n ( L_n \in B(\mu,2\delta) )
 &\ge& \label{term1.1}
\left(\frac{p(n)}{n^2}\right)^2 \frac{1}{p(n)^2} \sum_{i<j-1}\log\left(\left[(\xi_i-\xi_j)
\left(\xi_i^{\theta}-\xi_j^{\theta}\right)\right]\right) \\
&+&\left(\frac{p(n)}{n^2}\right)^2 \frac{1}{2p(n)^2}\sum_{i=1}^{p(n)-1} \log\left(
\left[(\xi_i-\xi_{i+1})\left(\xi_i^{\theta}-\xi_{i+1}^{\theta}\right)\right]\right) \label{term1.2}\\
&+&\frac{1}{n} \sum_{i=1}^{p(n)} \log \phi^{(n)}_i \label{term2}\\
&+&\frac{1}{n^2} \log \left(2\left(\frac{1}{2\theta}\right)^{p(n)} \right)+\frac{2p(n)-1}{n^2} 
( \log \delta^{\theta}- \log p(n) ). \label{Nullterm}
\end{eqnarray}
It is easily seen that \eqref{Nullterm} converges to 0 as $n$ tends to infinity.\\
Now, for (\ref{term1.1}) and (\ref{term1.2})   observe that for $\theta=1, \ \mbox{ resp. } \theta \in \N$,
\begin{eqnarray*}
&& \int_{x < y} \log(y^{\theta} - x^{\theta}) \mu(dx) \mu(dy) -\int_{(x, y) \in [\xi_{p(n)+1}, \xi_{p(n)}]^{\times 2}}\log|y^{\theta} - x^{\theta}| \mu(dx) \mu(dy)\\
&=& \sum_{i < j} \int_{(x, y) \in [\xi_{j+1}, \xi_j] \times [\xi_{i+1}, \xi_i]} \log(y^{\theta} - x^{\theta}) 
\mu(dx) \mu(dy) \\
&&
+ \frac{1}{2}\sum_{i=1}^{p(n)-1} \int_{(x, y) \in [\xi_{i+1}, \xi_i]^{\times 2}}\log|y^{\theta} - x^{\theta}| 
\mu(dx) \mu(dy)\\  
&\le& \frac{1}{p(n)^2} \sum_{i<j} \log(\xi_i^{\theta} -
\xi_{j+1}^{\theta}) + \frac{1}{2 p(n)^2} \sum_{i=1}^{p(n)-1} \log 
(\xi_i^{\theta} - \xi_{i+1}^{\theta}).
\end{eqnarray*}
Because $\xi_{p(n)+1}$ was defined to be the infimum of the support $\mathcal S$ and we look at measures without atoms, we
observe that $\xi_{p(n)}\rightarrow \xi_{p(n)+1}= \inf \mathcal S$ for $n\rightarrow \infty$. Thus, we find
\begin{eqnarray*}\label{fehl}
\lim_{n\rightarrow \infty} \int_{(x, y) \in [\xi_{p(n)+1}, \xi_{p(n)}]^{\times 2}}\log|y^{\theta} - x^{\theta}| \mu(dx) \mu(dy)
= \int_{\emptyset}\log|y^{\theta} - x^{\theta}| \mu(dx) \mu(dy)
=0.
\end{eqnarray*}
We have shown that the limit of \eqref{term1.1} and \eqref{term1.2} can be bounded from below in the following way,
\begin{eqnarray}\label{338}
&& \int \int \log(y^{\theta} - x^{\theta}) \mu(dx) \mu(dy)
=\frac{1}{2} \int_{x < y} \log(y^{\theta} - x^{\theta}) \mu(dx) \mu(dy) \nonumber \\
&\le&\frac{1}{2} \lim_{n \rightarrow \infty}\left(\frac{1}{p(n)^2} \sum_{i<j} \log(\xi_i^{\theta} -
\xi_{j+1}^{\theta}) + \frac{1}{2 p(n)^2} \sum_{i=1}^{p(n)-1} \log 
(\xi_i^{\theta} - \xi_{i+1}^{\theta})\right).
\end{eqnarray}
As to \eqref{term2}, observe that 
$$
 \frac{1}{n} \sum_{j=1}^{p(n)} \log \phi_j^{(n)} = \frac{p(n)}{n} \int \log \psi_n d\mu,
$$
where $\psi_n$ was defined in \eqref{psi}. 
Denote  by $l$ a Lipschitz constant of $\log w$ on $\cals$ according to (a1.3). For $\eta > 0$   
write 
$$
l^{\eta} := \max\{|\log w(x) - \log w(y)|:\ |x-y| \le
\eta\}.
$$
Note that $l^{\eta} \le l \eta$, again because of the Lipschitz continuity (a1.3) and since 
$\mathcal{N}(w) \cap \mathcal{S}=\emptyset$. Next, define
$
M(n, \delta) := \bigcup_{j \in I_n(\delta)} [\xi_{j+1}^{(n)},
\xi_j^{(n)}]
$ 
and
$
C := \max\{ \log w(x):\  x \in \cals \} -
\min\{ \log w(x):\ x \in \cals \},
$
where assumption (ii) provides that $C$ is finite. Let $\epsilon > 0$. Since for all $n$ and $j \ge 1$ one has
$\mu\left([\xi_{j+1}^{(n)}, \xi_j^{(n)}]\right) = 1 / p(n)$, and
since $\delta $ is fixed, one has $\mu( M(n, \delta)^c) \le
\epsilon$ for large $n$. 
Now let $n$ be large enough such that one also has 
$\| \log w_n - \log w\|_{\infty} \le \epsilon$, which is possible because of (a1.2). Then
\begin{eqnarray*}
 \int |\log \psi_n - \log w| d\mu 
&\le& \sum_{j \in I_n(\delta)} \int_{[\xi_{j+1}, \xi_j]} \left\{|\log \psi -\log w| + \epsilon \right\} d\mu  \\
&&+ \sum_{j \in I^c_n(\delta)} \int_{[\xi_{j+1}, \xi_j]} \left\{|\log \psi -\log w| + \epsilon \right\} d\mu + \epsilon\\
&\le& p(n) \frac{1}{p(n)}\ (l^{2\delta}+\epsilon)  + (C+ \epsilon)\epsilon 
\le 2l \delta+ (C+\epsilon+1)\epsilon
\end{eqnarray*}
This yields 
\begin{equation}\label{339}
 \limsup_{n \to \infty} \left| \frac{1}{n}
\sum_{j=1}^{p(n)} \log \phi_j^{(n)} - \kappa \int \log w\
d\mu\right| = O(\delta).
\end{equation}
Putting together \eqref{term1.1}-\eqref{339}  
yields the lower bound \eqref{lowbound}, whenever the empirical measure lies in a ball around some measure $\mu$ in $\mathcal{M}_1(\Sigma)$.
\subsection{Extending the weak LDP into a full one}
As $\Sigma$ is a closed subset of the Polish space $\R$, $\calm_1(\Sigma)$ is a Polish space, cf. \cite[Thm. 6.2]{Parthasarathy:ProbMeasure} and since the $B(\mu,\delta)$ are a base for the weak topology on $\calm_1(\Sigma)$, we can
apply \cite[Thm. 4.1.11]{Dembo/Zeitouni:LargeDeviations}, which provides
a weak LDP for $(P_n\circ L_n^{-1})_n$ on $\calm_1(\Sigma)$ with rate function $H$ and speed $n^2$.
Because $(P_n \circ L_n^{-1})_n$ is also exponentially tight, Lemma 1.2.18 in \cite{Dembo/Zeitouni:LargeDeviations} extends the weak LDP into a full one. 
But our original aim was to derive a LDP for $(Q_n \circ L_n^{-1})_n$.
Setting $A = G = \calm_1(\Sigma)$ in the lower and upper bound at the LDP for $(P_n\circ L_n^{-1})_n$, 
we obtain  
$$ \lim_{n \to \infty} \frac{1}{n^2} \log Z_n = - \inf_{\mu \in
\calm_1(\Sigma)} \int F d\mu^{\otimes 2}.$$ 
The right hand side has been shown to be $<+ \infty$ by Lemma \ref{bounded} (ii) and establishes \eqref{qv12}. 
Now,
$$
 \frac{1}{n^2} \log Q_n(L_n \in A) = \frac{1}{n^2} \left( \log P_n(L_n \in A) - \log Z_n\right)
$$
for any Borel set $A$ in $\calm_1(\Sigma)$. Hence Theorem \ref{ldp} is proven.

\section{Sketch of the proof of Theorem \ref{ldpMOP}}
\label{mopproof}
The Vandermonde-like products $\Delta(X)$ and $\Delta(X,Y)$ in \eqref{angelesco} lead to the observation,
that to overcome the singularity of the logarithm at several places is the most delicate part. Therefore, we
will skip most of the technical details of the proof worrying about too many replications of our arguments. 
Instead we give a sketch to be able to discover the rate function. Consider a neighborhood $O$ of $\mu=(\mu_1, \ldots, \mu_p)
\in \cM_1(\Gamma_{p, \vec{n}})$. With $\tilde{O} := \{ x \in {\Bbb R}^n: L_n \in O \}$
we obtain
$$
Q_n(L_n \in O) = \frac{1}{Z_n} \int \cdots \int_{\tilde{O}} \exp( F(x)) d x_1 \ldots d x_n
$$
with
\begin{equation} \label{function1}
F(x) = \sum_{i=1}^p \sum_{1 \leq j<k\leq n_i} \log (x_j^{(i)} - x_k^{(i)})^2 + \sum_{1 \leq i < j \leq p} \sum_{k=1}^{n_i} \sum_{l=1}^{n_j} \log(x_k^{(i)} - x_l^{(j)})
- n \sum_{i=1}^p \sum_{k=1}^{n_i} V_i(x_k^{(i)}).
\end{equation}
Very roughly speaking, we consider the case when $O$ goes to a point $\mu$. Then for  $x \in \tilde{O}$ the first summand in \eqref{function1}
can be approximated by
$$
 \sum_{i=1}^p \frac{n_i^2}{2} \int \int \log |x-y|^2 d\mu_i(x) d\mu_i(y),
$$
the second summand by
$$
\sum_{1 \leq i < j \leq p} n_i n_j \int \int \log|x-y| d\mu_i(x) d\mu_j(y)
$$
and the third summand by
$n\,  \sum_{i=1}^p n_i \int V_i(x) d\mu_i(x)$. 
With \eqref{c1} one gets
\begin{eqnarray*}
& & \frac{1}{n^2} \log Q_n (L_n \in O) \\
& \approx & \sum_{i=1}^p \frac{r_i^2}{2} \int \int \log |x-y|^2 d\mu_i(x) d\mu_i(y) +  \sum_{1 \leq i < j \leq p} r_i r_j \int \int \log|x-y| d\mu_i(x) d\mu_j(y) \\ &+&
\sum_{i=1}^p r_i \int V_i(x) d\mu_i(x) + \frac{1}{n^2} \log Z_n.
\end{eqnarray*}



\newcommand{\SortNoop}[1]{}\def\cprime{$'$}
\providecommand{\bysame}{\leavevmode\hbox to3em{\hrulefill}\thinspace}
\providecommand{\MR}{\relax\ifhmode\unskip\space\fi MR }
\providecommand{\MRhref}[2]{%
  \href{http://www.ams.org/mathscinet-getitem?mr=#1}{#2}
}
\providecommand{\href}[2]{#2}


\end{document}